\title{MinRank Gabidulin encryption scheme on matrix codes}
\author{Nicolas Aragon\inst{1} \and Alain Couvreur\inst{2} \and Victor Dyseryn\inst{3} \and Philippe Gaborit\inst{1} \and Adrien Vinçotte\inst{1}}
\institute{
XLIM, Université de Limoges, France \and LIX, École Polytechnique, France \and Télécom Paris, France}
\begin{document}
	
\maketitle

\begin{abstract}
  The McEliece scheme is a generic frame introduced in \cite{M78},
  which allows to use any error correcting code for which there exists
  an efficient decoding algorithm to design an encryption scheme by
  hiding the generator matrix of the code. Similarly, the Niederreiter
  frame, introduced in \cite{N86}, is the dual version of the McEliece
  scheme, and achieves smaller ciphertexts.  In the present paper, we
  propose a generalization of the McEliece frame and the Niederreiter
  frame to matrix codes and the MinRank problem, that we apply to
  Gabidulin matrix codes (Gabidulin rank codes considered as matrix
  codes). The masking we consider consists in starting from a rank
  code $\mathcal C$, computing a matrix version of $\mathcal C$ and then
  concatenating a certain number of rows and columns to the matrix
  code version of the rank code $\mathcal C$ before applying an isometry for
  matrix codes, i.e. right and left multiplications by fixed random
  matrices. The security of the schemes relies on the MinRank problem
  to decrypt a ciphertext, and the structural security of the scheme
  relies on the new EGMC-Indistinguishability problem that we
  introduce and that we study in detail. The main structural attack
  that we propose consists in trying to recover the masked linearity
  over the extension field which is lost during the masking
  process. Overall, starting from Gabidulin codes we obtain a very
  appealing tradeoff between the size of the ciphertext and the size of
  the public key. For 128 bits of security we propose parameters
  ranging from ciphertexts of size 65 B (and public keys of size 98
  kB) to ciphertexts of size 138B (and public keys of size 41 kB).
  For 256 bits of security, we can obtain ciphertext as low as 119B,
  or public key as low as 87kB.  Our new approach permits to achieve a
  better trade-off between ciphertexts and public key than the
  classical McEliece scheme instantiated with Goppa codes.
  
\end{abstract}

\section{Introduction}

\subsubsection*{Matrix codes and vector codes}

Introduced in 1951 in \cite{H51}, before the well-known notion of Hamming metric, matrix codes are $\Fq$-linear subspaces of the matrix space $\Fq^{m \times n}$, endowed with the rank metric.
The generic decoding problem in matrix codes relies on the difficulty to solving an instance of the MinRank problem, which is known to be NP-complete. A rank metric code of length $n$ over the extension field $\Fqm$ is an $\Fqm$--linear vector subspace of $\Fqm^n$, endowed with the rank metric. It is possible to turn a vector code into a matrix code by considering all vectors of a vector code
as matrices over $\Fq$ by writing of the extension field $\Fqm$ as a vector space of dimension $m$ over $\Fq$. The main advantage of vector codes compared to matrix codes lies in the fact that the linearity over the extension field permits to achieve smaller description of the generator matrix of the code (division by a factor $m$, the degree of the field extension).

\subsubsection*{Code-based cryptography.}
There exists two approaches for code-based cryptography: the first one in which the public key consists in a masking of a structured code, the McEliece frame,
for instance using Goppa codes \cite{M78} or MDPC codes \cite{MTSB12}.
In that case the hidden code is used both for encryption and decryption. The second one is the  Aleknovich approach \cite{A03} and its variants like \cite{AABBBDGZ17,AABBBDGPZ21a} in which there is no structural masking and for which two types of codes are considered, a random (or random quasi-cyclic) code for encryption and another code for decryption.
The main advantage of the second approach is that it avoids structural attacks, so it is semantically stronger in terms of security. However, the price to pay is a larger ciphertext (typically quadratic in the security parameter), whereas the first approach permits to obtain smaller ciphertexts (linear in the security parameter) at a cost of a very large public key.
For instance for 128 bits of security for McEliece it is possible to get ciphertexts of size of order 100B, whereas for 
HQC the ciphertext is of order 1500kB.
There exist applications for which having a very small ciphertext may be of interest.

\subsubsection*{Structural attacks and distinguishers on McEliece-like schemes.}
 The main tool for structural attacks is to consider a distinguisher for the underlying masked code. While it is really interesting to consider Reed-Solomon codes or Gabidulin codes -- their rank metric analog -- for encryption because of their very good decoding properties, these codes are very highly structured which makes them easy to distinguish from random codes. For Reed-Solomon codes, the main distinguisher is the square code distinguisher of \cite{CGGOT13} which states that the square code of an $[n, k]$ Reed-Solomon code is a code of dimension $2k-1$, whereas for a random code its dimension is close to $\min (n, \frac{k(k-1)}{2})$. There exists an analog distinguisher for Gabidulin codes \cite{O08}, starting from an $[m,k]$ Gabidulin code over $\Fqm$ with generator matrix $G$, and considering $G^q$ the matrix in which one applies the Frobenius action $x \mapsto x^q$
to all entries of $G$. The dimension of the code whose generator matrix is a vertical concatenation of $G$ and $G^q$ has dimension $k+1$, whereas for a random code it would be $2k$. These two distinguishers were used to break systems in which the masking consisted in adding random columns (and then applying an isometry) to Reed-Solomon codes \cite{Wie06} in \cite{CGGOT13} or to Gabidulin codes \cite{OG23} in \cite{O08}.

\subsubsection*{\bf The case of the original Goppa-McEliece scheme.}
The main masking which is still resistant to structural attacks is the case of the original Goppa-McEliece scheme.
This case can be seen as considering generalized Reed-Solomon codes over an extension field $\mathbb{F}_{2^m}$, then considering its $\mathbb{F}_{2}$ subcodes.
The effect of considering the $\mathbb{F}_{2}$ subcode is that it breaks the structure over the $\mathbb{F}_{2^m}$ extension and hence the direct application of a Reed-Solomon distinguisher, at least in the case were the rate of the code is not close to 1, which is the general case (see also \cite{FGOPT10}).

\subsubsection*{\bf Contributions.} Inspired by the previous situation we consider the case of Gabidulin codes in which we want to break the structure over the extension on which relies the Gabidulin distinguisher. Indeed, in order to apply the Overbeck distinguisher we need to consider the application of the Frobenius map which only makes sense over an extension $\Fqm$ and not directly on the base field $\mathbb{F}_{q}$.
We consider the following masking: we start from a matrix code, which can decode up to errors of rank $r$, we then add random rows and columns and apply an isometry for matrix codes (multiplying on the left and on the right by two fixed random invertible matrices). Starting from a vector code $\mathcal C$, we call \emph{enhanced code} the transformation which consists in turning the vector code $\mathcal C$ into a matrix code and then applying the previous masking.
Since it is a masking, there always exists the possibility of a structural attack but the idea is that considering matrix version of the vector codes with our masking (in the spirit of the original Goppa-McEliece scheme) permits to avoid an efficient action of a distinguisher.

The main contributions of the paper are the following: \begin{itemize}
\item we describe a general encryption McEliece-like frame for matrix codes over the MinRank problem,
\item we propose a general masking for matrix codes that we apply on a matrix code version of Gabidulin codes,
\item we study in detail possible distinguishers for solving the Enhanced Gabidulin Matrix Code (EGMC) distinguishing problem that we introduce for our scheme.
\end{itemize}

In terms of distinguisher the best attack that we obtain for the
vector code we consider consists in trying to dismiss the action of
the addition of random rows and columns on the matrix code in order to
test for its underlying linearity over the extension field.  In terms
of parameters, our new approach permits to obtain an alternative
scheme to the classic McEliece scheme with very small ciphertexts and
even smaller public keys than in the classic McEliece scheme.  For 128
bits of security, we can achieve a size of 65B for the ciphertext and
98kB for the public key, or 138B for the ciphertext and 41kB for the
public key. These results compare very well to classic McEliece's
parameters \cite{ABCCGLMMMNPPPSSSTW20}) yielding ciphertext of 96 B
and public keys of 261 kB.
For 256 bits of security, we can obtain ciphertext as low as 119B, or public key as low as 87kB.

\subsubsection*{Organization of the paper.} The paper is organized as follows. Section~\ref{sec:prelim} gives an introduction and preliminaries, Section~\ref{sec:emc} defines the masking transformations we introduce and the attached distinguishing problems, Sections~\ref{new_frames} and~\ref{nes} describe the new encryption schemes we propose, Section~\ref{sec:secu} examines in detail different types of possible distinguishers and Section~\ref{sec:param} details parameters of our schemes.

\section{Preliminaries}\label{sec:prelim}

\subsection{Public key encryption}

\begin{definition}[PKE]
A public key encryption scheme PKE consists of three polynomial time algorithms:\begin{itemize}
    \item $\KeyGen (1^{\lambda})$: for a security parameter $\lambda$, generates a public key $\pk$ and a secret key $\sk$.
    \item $\Encrypt (\pk, \mv)$: outputs a ciphertext $\cv$ given the message $\mv$ and the public key $\pk$.
    \item $\Decrypt (\sk, \cv)$: outputs the plaintext $\mv$ of the encrypted ciphertext $\cv$, or $\perp$.
\end{itemize}
We require a PKE scheme to be correct: for every pair $(\pk, \sk)$ generated by $\KeyGen$ and message $\mv$, we should have $\Pr \left(\Decrypt(\sk,\Encrypt(\pk,\mv))=\mv\right)=1 - \negl (\lambda)$, where $\negl$ is a negligible function.
\end{definition}

There exist several notions of security for PKE schemes. The encryption schemes we propose here achieve OW-CPA security.

\begin{definition}[OW-CPA security]
    Let $\PKE = (\KeyGen, \Encrypt, \Decrypt)$ a PKE scheme, $\mA$ an adversary against $\PKE$, and $\lambda$ a level of security. We define the OW-CPA game: \begin{itemize}
        \item \textbf{Challenge.} The challenger generates $(\pk,\sk)\leftarrow\KeyGen (1^\lambda)$, samples $\mv$ from the set of messages and computes $\cv\leftarrow\Encrypt (\pk,\mv)$. He sends $(\cv,\pk)$ to the adversary $\mA$.
        \item \textbf{Output.} $\mA$ outputs the guessing message $\Tilde{\mv}$. $\mA$ wins if $\Tilde{\mv}=\mv$.
    \end{itemize}
$\PKE$ is OW-CPA secure if for any PPT adversary $\mA$, the probability that $\mA$ wins the game is negligible in $\lambda$.
\end{definition}

To prove the security of our PKE, we need to formally define the notion of indistinguishable distributions.

\begin{definition}[Distinguisher]
    Let $S$ a set of elements, $\mathcal{P}_1$ and $\mathcal{P}_2$ two probability distributions on $S$. A distinguisher $\mD$ for the distributions $\mathcal{P}_1$ and $\mathcal{P}_2$ is an algorithm that takes in input an element of $S$, and outputs a bit. Its advantage is defined by: $$\Adv (\mD) = \big\lvert \Pr (\mD (x)=1\vert x\leftarrow\mathcal{P}_1) - \Pr (\mD (x)=1\vert x\leftarrow\mathcal{P}_2)\big\rvert.$$
\end{definition}

We say that the distributions $\mathcal{P}_1$ and $\mathcal{P}_2$ are indistinguishable if there exists no polynomial time distinguisher with non negligible advantage.

\subsection{Matrix codes}

\begin{definition}[$\gamma$-expansion]\label{g-e}
  Let $\gamma = (\gamma_1,\dots,\gamma_m)$ be an $\Fq$-basis of $\Fqm$.
  The $\gamma$-expansion of an element in $\Fqm$ to a vector in $\Fq^m$ is defined as the application:
\[
\Psi_\gamma: x \in \Fqm \mapsto (x_1,\dots,x_m) \in \Fq^m 
\]
such that $x = \sum_{i=1}^{m} x_i \gamma_i$.
\end{definition}

$\Psi_\gamma$ extends naturally to a word $\xv \in \Fqm^n$ and turns
it into a matrix $\Psi_\gamma(\xv)\in\Fqmn$, by writing in columns the
coordinates of each element in basis $\gamma$ in column. We denote as
$\mathcal{B}(\Fqm)$ the set of $\Fq$-bases of $\Fqm$.

\begin{definition}[Vector codes with rank metric]
A vector code $\Cvec$ is an $\Fqm$-subspace of $\Fqm^n$ endowed with the rank metric. The weight of a vector $\xv\in\Fqm^n$ is the rank of the matrix $\Psi_\gamma(\xv)$, for a basis $\gamma\in\mathcal{B}(\Fqm)$. The weight of a vector is independent of the choice of the basis $\gamma$.
\end{definition}

Equivalently, the \emph{rank of a vector $\xv$} is the dimension of
the $\Fq$-vector space spanned by its coordinates, that is the \emph{support}
of $\xv$.
We denote the weight of a vector $\xv$ by:
\[
    \wrank{\xv} \eqdef \rank (\Psi_\gamma(\xv))
    =\dim (\langle x_1,...,x_n\rangle_q)
\]

\begin{definition}[Matrix codes]
A matrix code $\Cmat$ is an $\Fq$-subspace of $\Fqmn$ endowed with the rank metric.
\end{definition}

An $\Fqm$-linear vector code $\Cvec$ of parameters $[n,k]_{q^m}$ can
be turned into a matrix code of size $m\times n$ and dimension $mk$,
defined as:
\[
\Psi_\gamma(\Cvec) := \{ \Psi_\gamma(\xv) \,|\, \xv \in \Cvec\}.
\]
Let $\gamma\in\mathcal{B}(\Fqm)$, and $(\vv_i)_{i\in\{1,...,k\}}$ be an $\Fqm$-basis of $\Cvec$, then an $\Fq$-basis of $\Psi_{\gamma}(\Cvec)$ is given by:
$$\Big\lbrace \Psi_{\gamma}(b\vv_i)~\vert~ b\in\gamma, i\in\{1,...,k\}\Big\rbrace.$$

\begin{definition}[Equivalent matrix codes]
    Two matrix codes $\Cmat$ and $\Dmat$ are said to be \emph{equivalent} if there exist two matrices $\Pv \in \mathbf{GL}_{m}(\Fq)$ and $\Qv \in \mathbf{GL}_{n}(\Fq)$ such that $\Dmat=\Pv\Cmat\Qv$. If $\Pv = \Iv_m$ (resp. $\Qv = \Iv_n$), $\Cmat$ and $\Dmat$ are said to be right equivalent (resp. left equivalent).
\end{definition}

As seen above, the transformation of a vector code into a matrix code is dependent on the choice of the basis $\gamma$. However, two different bases produce left-equivalent codes. For two bases $\beta$ and $\gamma$, if we denote $\Pv$ the transition matrix between $\beta$ and $\gamma$, we get:
\[
\Psi_\gamma(\Cvec) = \Pv\, \Psi_\beta(\Cvec)
\]

\begin{definition}[Folding]
The application $\Fold$ turns a vector $\vv =(\vv_1\Vert ...\Vert\vv_n)\in \Fq^{mn}$ such that each $\vv_i\in\Fq^n$, into the matrix $\Fold(\vv) \eqdef \left({\vv_1}^t\Vert\dots\Vert{\vv_n^t}\right)\in\Fqmn$. The inverse map which turns a matrix into a vector is denoted by $\Unfold$.
\end{definition}

The map $\Unfold$ allows to define a matrix code $\Cmat$ thanks to a generator matrix $\Gv\in\Fq^{mk\times mn}$, whose rows are an unfolded set of matrices which forms a basis of $\Cmat$. This characterization allows to construct the associated parity check matrix $\Hv$, and define the dual of a matrix code: one can simply define $\Cmatd$ as the code generated by matrices equal to the folded rows of $\Hv$. A more formal definition follows:

\begin{definition}[Dual of a matrix code]\label{def:dual_matrix_code}
Let $\Cmat$ be a matrix code of size $m\times n$ and dimension $K$. Its dual is the matrix code of size $m\times n$ and dimension $mn-K$: $$\Cmatd = \left\lbrace \Yv\in\Fqmn \;|\; \forall\Xv\in\Cmat\; \tr (\Xv\Yv^t)=0 \right\rbrace.$$
\end{definition}

\subsection{MinRank problem}
\label{sec_MR}

The MinRank problem can be seen as the Rank Decoding problem adapted to matrix codes:

\begin{definition}[MinRank problem]
Given as input matrices $\Yv,\Mv_1,\dots,\Mv_k\in\Fq^{m\times n}$, the $\MR (q,m,n,k,r)$ problem asks to find $x_1,\dots,x_k\in\Fq$ and $\Ev\in\Fq^{m\times n}$ with $\rank\Ev\leq r$ such that $\Yv = \sum_{i=1}^k x_i\Mv_i
+ \Ev$.
\end{definition}

The decoding problem for a matrix code of size $m\times n$ and dimension $K$ is exactly the $\MR (q,m,n,K,r)$ problem. While the MinRank problem is well known, it is more uncommon to present this decoding problem from the syndrome decoding point of view.

Let be $\Xv=\sum_{i=1}^K x_i\Mv_i\in\Cmat$ with basis the set of $(\Mv_i)$ and $\Ev\in\Fqmn$ an error of small weight. The code $\Cmat$ can be represented as a vector code over $\Fq$, with generator matrix $\Gv\in\Fq^{K\times mn}$ has for rows the vectors $\Unfold(\Mv_i)$. A word to decode would have the form: $\Unfold (\Xv+\Ev)=\sum_{i=1}^K x_i\Unfold(\Mv_i)+\Unfold (\Ev)$, and the weight of the error is defined as the rank of $\Ev$.

Let $\Hv\in\Fq^{(mn-K)\times mn}$ be a parity-check matrix of the dual code. We denote by $(\hv_i)_{1\leq i\leq mn}$ its columns. Let be $\xv\in\Fq^{mn}$ a noisy codeword with error $\ev\in\Fq^{mn}$. The weight of $\ev$ is the rank of the matrix $\Fold (\ev)$.  Since one has reduced itself to a vector code, one can define the syndrome associated to an error in the same way:
$$\sv = \xv\Hv^t = \ev\Hv^t = \sum_{i=1}^{mn} e_i\hv_i^t.$$
One can deduce the associated problem: 

\begin{definition}[MinRank-Syndrome problem]
Given as input vectors $\sv,\vv_1$, $\dots,\vv_k\in\Fq^{nm-k}$, the $\MRS (q,m,n,k,r)$ problem asks to find $(e_1,\dots,e_{nm})\in\Fq^{nm}$ with $\rank\Fold (\ev)\leq r$ such that $\sv = \sum_{i=1}^k e_i\vv_i$.
\end{definition}

The two previous problems are equivalent for the very same reasons that the decoding problem and the syndrome decoding problem are in the vector codes context.

\subsection{Gabidulin codes}

Gabidulin codes were introduced by Ernst Gabidulin in 1985 \cite{G85}. These vector codes can be seen as the analog in rank metric of the Reed-Solomon codes, where the codeword is a set of evaluation points of a $q$-polynomial rather than a standard polynomial.

\begin{definition}[$q$-polynomial]
  A $q$-polynomial of $q$-degree $r$ is a polynomial in $\Fqm[X]$ of the
  form:
  $$P(X) = \sum_{i=0}^{r}p_iX^{q^i} \qquad \text{with } p_r \neq 0.$$
\end{definition}
For a $q$-polynomial $P$, we denote by $\deg_q P$ its $q$-degree.

A $q$-polynomial is also called 
\emph{linearized polynomial} since it induces an $\Fq$-linear
application due to the linearity of the Frobenius
endomorphism $x\mapsto
x^q$. 

\begin{definition}[Gabidulin code]
  Let $k,m,n\in\mathbb{N}$, such that $k\leq n\leq m$. Let
  $\gv=(g_1,\dots,g_n)\in\Fqm^n$ a vector of $\Fq$-linearly
  independent elements of $\Fqm$. The Gabidulin code
  $\mathcal{G}_{\gv} (n,k,m)$ is the vector code of parameters
  $[n,k]_{q^m}$ defined by:
$$\mathcal{G}_{\gv} (n,k,m) = \left\lbrace P(\gv)\vert\; \deg_q P < k\right\rbrace,$$ where $P(\gv)=(P(g_1),\dots,P(g_n))$ and $P$ is a $q$-polynomial.
\end{definition}

The vector $\gv$ is said to be an \emph{evaluation vector} of the Gabidulin code $\mathcal{G}_{\gv} (n,k,m)$.

Gabidulin codes are popular in cryptography because they benefit from a very efficient decoding algorithm, allowing to correct errors of rank weight up to $\left\lfloor\frac{n-k}{2}\right\rfloor$ \cite{G85}. However, their strong structure makes them difficult to hide.

\subsection{The GPT Cryptosystem}

It has been proposed in \cite{GO01} to concatenate to a generator matrix a random matrix, in order to mask a Gabidulin code more efficiently. Introduced in 1991 by Gabidulin, Paramov and Tretjakov, the GPT Cryptosystem is an adaptation of the McEliece frame to the rank metric. The first versions having been attacked by Gibson \cite{O08}. We present in Figure \ref{GPT} the last variant proposed by Gabidulin and Ourivski in \cite{GO01}. However, this scheme has been broken by the Overbeck attack, that we describe in Section \ref{str-att}.

\begin{figure}[h]
    \pcb[codesize=\scriptsize, minlineheight=0.75\baselineskip, mode=text, width=0.98\textwidth] { 
    $\KeyGen (1^\lambda)$: \\
    \pcind - Select a random Gabidulin $[n,k]_{q^m}$ code $\G$, with generator matrix $\Gv$. \\
    \pcind - Sample uniformly at random matrices: $\Sv \getsr \mathbf{GL}_k(\Fqm)$ and $\Pv \getsr \mathbf{GL}_{n+ \ell}(\Fq)$.\\
    \pcind - Sample uniformly at random: $\Xv\getsr\Fqm^{k\times \ell}$\\
    \pcind - Compute $\Gpub=\Sv (\Xv\vert\Gv)\Pv\in\Fqm^{k\times (n+\ell)}$.\\
    \pcind - Return $\pk=\Gpub$ and $\sk=(\Gv,\Sv,\Pv)$\\[\baselineskip]
  $\Encrypt (\pk,\mv)$: \\
  	\pcind \textit{Input:} $\pk = \Gpub$, $\mv\in\Fqm^k$.\\
  	\pcind - Sample uniformly at random a vector $\ev\in\Fqm^n$ of rank $r = \left\lfloor\frac{n-k}{2}\right\rfloor$.\\
  	\pcind - Return $\cv = \mv\Gpub + \ev$.\\[\baselineskip]
  	$\Decrypt (\sk,\cv)$: \\
  	\pcind - Compute $\cv\Pv^{-1}$.\\
  	\pcind - Apply the decoding algorithm of $\G$ on $\cv\Pv^{-1}$ to retrieve $\mu=\mv\Sv$.\\
    \pcind - Return $\mv=\mu\Sv^{-1}$.
  	}
\vspace{-\baselineskip}
\captionof{figure}{\footnotesize{GPT Cryptosystem}}
\label{GPT}
\end{figure}

We subsequently propose a new version of the scheme to prevent the Overbeck attack. In our scheme, we propose to turn a Gabidulin code into a matrix code before adding some random rows and columns, and multiplying it by a secret invertible matrix.


\section{Enhanced matrix codes transformation}
\label{sec:emc}

We present a general construction which defines a transformation containing a trapdoor and allowing to mask a secret matrix code.

\begin{definition}[Random Rows and Columns matrix code transformation]\label{rrcmc}
Let $m,n,K,\ell_1,\ell_2\in\mathbb{N}$. Let $\Cmat$ be a matrix code of size $m\times n$ and dimension $K$ on $\Fq$.  Let $\mathcal{B}= (\Av_1,...,\Av_K)$ be a basis of $\Cmat$. The Random Rows and Columns matrix code transformation consists in sampling uniformly at random the following matrices: $\Pv\getsr\mathbf{GL}_{m+\ell_1}(\Fq)$,
  $\Qv\getsr\mathbf{GL}_{n+\ell_2}(\Fq)$ and $K$ random matrices:
  $\Rv_i\getsr\Fqml$,
  $\Rv'_i\getsr\Fqlm$ and
  $\Rv''_i\getsr\Fqll$;  and define the matrix code whose
 basis is: $$\mathcal{RB} = \Biggl(\Pv\begin{pmatrix}
    \Av_1 & \Rv_1 \\
    \Rv'_1 & \Rv''_1
\end{pmatrix}\Qv,\dots ,\Pv\begin{pmatrix}
    \Av_{K} & \Rv_{K} \\
    \Rv'_{K} & \Rv''_{K}
\end{pmatrix}\Qv\Biggr).$$
\end{definition}

In particular, we can apply this construction to Gabidulin codes turned into matrix codes:

\begin{definition}[Enhanced Gabidulin matrix code] \label{egmc} Let
  $\G$ be a Gabidulin code $[n,k,r]$ on $\Fqm$, $\gamma$ be a $\Fq$-basis of
  $\Fqm$.  We recall that $\Psi_{\gamma}$ turns a vector $\xv\in\Fqm$
  into a matrix $\Psi_{\gamma}(\xv)$ whose columns are coordinates of
  coefficients of $\xv$ in the basis $\gamma$ (see Definition
  $\ref{g-e}$). An Enhanced Gabidulin matrix code
  $\mathcal{E}\G_{\gv}(n,k,m,\ell_1,\ell_2)$ is the matrix code
  $\Psi_{\gamma}(\G)$ on which we apply the Random Rows and Columns
  matrix code transformation presented in Definition \ref{rrcmc}.
\end{definition}

\subsubsection{Duality.}
The dual of an enhanced Gabidulin matrix code can be described as
follows.  Start from an $[n,k]$ Gabidulin code $\G$ over $\Fqm$. It is
well-known that the dual of the vector code $\G^\perp$ for the
Euclidean inner product is an $[n, n-k]$ Gabidulin code over
$\Fqm$. Moreover, given an basis $\gamma$ for $\Fqm$ and denoting by $\gamma'$
the dual basis with respect to the trace inner product in $\Fqm$, then
from \cite[Thm.~21]{R15b}, we get
\[ \Psi_{\gamma}(\G)^\perp = \Psi_{\gamma'}(\G^\perp).\] In short, the
dual of the matrix code associated to $\G$ as defined in
Definition~\ref{def:dual_matrix_code} is a matrix code associated to a
Gabidulin code.

\begin{proposition}\label{prop:dual_emc}
  Following the notation of Definitions~\ref{rrcmc} and~\ref{egmc},
  let $\Bv_1, \dots $, $\Bv_{m(n-k)}$ be an $\Fq$--basis of
  $\Psi_{\gamma}(\G)^\perp$.
  Let $\mathcal V \subseteq \Fqm^{(m + \ell_1)(n+\ell_2)}$ be defined as
  \[
    \mathcal V \eqdef \left\{
    \begin{pmatrix}
      \Rv & \mathbf{0} \\ \mathbf{0} & \mathbf{0} 
    \end{pmatrix}
    ~\bigg|~ \Rv \in \Fq^{m\times n}
    \right\}.
  \]
  Then, there exists a space $\mathcal W \subseteq
  \Fq^{(m+\ell_1) \times (n+ \ell_2)}$ such that
  \(\mathcal{E}\G_{\gv}(n,k,m,\ell_1,\ell_2)^\perp
  = {(\Pv^t)}^{-1} \Cmat {(\Qv^t)}^{-1}
  \)
  and
  \[
    \Cmat = \text{Span}_{\Fq} \left\{
      \begin{pmatrix}
        \Bv_1 & \mathbf{0} \\ \mathbf{0} & \mathbf{0}
      \end{pmatrix}
      , \dots ,
      \begin{pmatrix}
        \Bv_{m(n-k)} & \mathbf{0} \\ \mathbf{0} & \mathbf{0}
      \end{pmatrix}
    \right\} \oplus \mathcal W.
  \]
  Moreover, $\mathcal W$ is a complement subspace of $\mathcal V$ in
  $\Fq^{(m+\ell_1) \times (n+\ell_2)}$.
\end{proposition}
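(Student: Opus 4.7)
The plan is to proceed in three steps: reduce the claim to an unmasked version by transporting through $\Pv, \Qv$; compute the intersection with $\mathcal V$ explicitly; and construct the complement $\mathcal W$ via a linear section.

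I would first denote by $\mathcal D$ the intermediate matrix code spanned over $\Fq$ by the block matrices $\Mv_i \eqdef \bigl(\begin{smallmatrix}\Av_i & \Rv_i \\ \Rv'_i & \Rv''_i\end{smallmatrix}\bigr)$, so that $\mathcal{E}\G_{\gv}(n,k,m,\ell_1,\ell_2) = \Pv \mathcal D \Qv$. The identity $\tr(\Pv\Xv\Qv\Yv^t) = \tr(\Xv(\Pv^t\Yv\Qv^t)^t)$ together with the invertibility of $\Pv, \Qv$ immediately yields $(\Pv\mathcal D\Qv)^\perp = (\Pv^t)^{-1}\mathcal D^\perp(\Qv^t)^{-1}$, reducing the claim to proving the decomposition for $\Cmat \eqdef \mathcal D^\perp$.

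Next, I would write any $\Yv \in \Fq^{(m+\ell_1)\times(n+\ell_2)}$ in block form and expand
\[
\langle \Mv_i, \Yv\rangle = \tr(\Av_i \Yv_{11}^t) + \tr(\Rv_i \Yv_{12}^t) + \tr(\Rv'_i \Yv_{21}^t) + \tr(\Rv''_i \Yv_{22}^t).
\]
When $\Yv \in \mathcal V$, the last three traces vanish and the orthogonality relation reduces to $\Yv_{11} \in \Psi_\gamma(\G)^\perp$, identifying $\Cmat \cap \mathcal V$ with the span of the matrices $\bigl(\begin{smallmatrix}\Bv_j & \mathbf{0} \\ \mathbf{0} & \mathbf{0}\end{smallmatrix}\bigr)$ for $j = 1, \dots, m(n-k)$, which is exactly the first summand of the claim.

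For the complement, I would use that the linear map $\varphi: \Fq^{m\times n} \to \Fq^{mk}$, $\Yv_{11} \mapsto (\tr(\Av_i \Yv_{11}^t))_i$, is surjective: its kernel $\Psi_\gamma(\G)^\perp$ has dimension $m(n-k)$, matching image dimension $mk$. Fixing any linear section $\rho$ of $\varphi$ and setting
\[
\Yv_{11}^0 \eqdef -\rho\bigl((\tr(\Rv_i\Yv_{12}^t)+\tr(\Rv'_i\Yv_{21}^t)+\tr(\Rv''_i\Yv_{22}^t))_i\bigr),
\]
I would let $\mathcal W$ be the image of the linear map $(\Yv_{12}, \Yv_{21}, \Yv_{22}) \mapsto \bigl(\begin{smallmatrix}\Yv_{11}^0 & \Yv_{12} \\ \Yv_{21} & \Yv_{22}\end{smallmatrix}\bigr)$. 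Then $\mathcal W \subseteq \Cmat$ by construction, $\mathcal W \cap \mathcal V = \{0\}$ because the zero triple maps to zero, and $\mathcal W + \mathcal V$ equals the full ambient space since the off-diagonal and lower-right blocks can be prescribed arbitrarily. A dimension count ($\dim \Cmat = (m+\ell_1)(n+\ell_2) - mk$, $\dim(\Cmat\cap\mathcal V) = m(n-k)$, $\dim \mathcal W = (m+\ell_1)(n+\ell_2) - mn$) then confirms that $(\Cmat \cap \mathcal V) \oplus \mathcal W = \Cmat$. The only conceptual subtlety is that $\mathcal W$ depends on the choice of section $\rho$ and is therefore non-canonical, which is exactly why the proposition asserts only its existence.
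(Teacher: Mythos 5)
Your proposal is correct and follows essentially the same route as the paper: transport the duality through $\Pv,\Qv$, identify $\Cmat \cap \mathcal V$ with the zero-extended dual Gabidulin matrix code via the block expansion of the trace form, and conclude by a dimension count. The only difference is that you construct $\mathcal W$ explicitly via a linear section of the surjection $\Yv_{11} \mapsto (\tr(\Av_i\Yv_{11}^t))_i$, whereas the paper simply takes an arbitrary complement of $\Cmat^0$ inside $\Cmat$ and notes that it automatically meets $\mathcal V$ trivially; your version is more detailed but not a different argument.
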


\begin{proof}
  Denote by $\Cmat^0$, the space
  \[
    \Cmat^0 \eqdef \text{Span}_{\Fq} \left\{
      \begin{pmatrix}
        \Bv_1 & \mathbf{0} \\ \mathbf{0} & \mathbf{0}
      \end{pmatrix}
      , \dots ,
      \begin{pmatrix}
        \Bv_{m(n-k)} & \mathbf{0} \\ \mathbf{0} & \mathbf{0}
      \end{pmatrix}
    \right\}.
  \]
  Clearly, matrices of
  ${(\Pv^t)}^{-1} \Cmat^0 {(\Qv^t)}^{-1}$ lie in
  \(\mathcal{E}\G_{\gv}(n,k,m,\ell_1,\ell_2)^\perp\).
  Moreover, any matrix of the shape ${(\Pv^t)}^{-1}
  \begin{pmatrix}
    \Bv & \mathbf 0 \\ \mathbf 0 & \mathbf 0 
  \end{pmatrix}
  {(\Qv^t)}^{-1} $ lying in
  \(\mathcal{E}\G_{\gv}(n,k,m,\ell_1,\ell_2)^\perp\) should satisfy
  $\Bv \in \Psi_{\gamma}(\G)^\perp$.  
  Consequently, $\mathcal V \cap \mathcal W = 0$ and, for dimensional
  reasons, they should be complement subspaces.
\end{proof}

More generally, one can prove that any matrix code with the above
shape is the dual of an enhanced Gabidulin code. Namely, starting with
a matrix Gabidulin code ``extended by zero'' to which we add a random
complement subspace of the aforementioned space $\mathcal V$ and which
we left and right multiply by invertible matrices, we get the dual of
an enhanced Gabidulin code.

\subsubsection{Indistinguishability of Enhanced Gabidulin matrix codes.}
The security of the
scheme we introduce later is based on the difficulty of distinguishing
a random matrix code from a code as defined above.

We formally define in this section the problem of distinguishing an
Enhanced Gabidulin matrix code from a random matrix code, on which the security of the EGMC-McEliece encryption scheme that we
present below is based, and the associated search problem. We conjecture that
both problems are difficult to solve.

\begin{definition}[$\EGMC (k,m,n,\ell_1,\ell_2)$ distribution]\label{def:egmc_distribution}
  Let $k,m,n,\ell_1,\ell_2\in\mathbb{N}$ be such that $k\leq n\leq
  m$. The Enhanced Gabidulin Matrix Code distribution
  $\EGMC (k,m,n,\ell_1,\ell_2)$ samples a vector $\gv\getsr\Fqm^n$, a
  basis $\gamma\getsr\mathcal{B}(\Fqm)$,
  $\Pv\getsr\mathbf{GL}_{m+\ell_1}(\Fq)$,
  $\Qv\getsr\mathbf{GL}_{n+\ell_2}(\Fq)$ and $km$ random matrices:
  $\Rv_i\getsr\Fqml$,
  $\Rv'_i\getsr\Fqlm$ and
  $\Rv''_i\getsr\Fqll$;
  computes $\mathcal{B}= (\Av_1,...,\Av_{km})$ a basis of the matrix
  code $\Psi_{\gamma}(\mathcal{G})$, and outputs the matrix code with basis: $$\mathcal{RB} = \Biggl(\Pv\begin{pmatrix}
    \Av_1 & \Rv_1 \\
    \Rv'_1 & \Rv''_1
\end{pmatrix}\Qv,...,\Pv\begin{pmatrix}
    \Av_{km} & \Rv_{km} \\
    \Rv'_{km} & \Rv''_{km}
\end{pmatrix}\Qv\Biggr).$$
\end{definition}

\begin{definition}[$\EGMC $-Indistinguishability problem]
\label{egmc-ind}
Given a matrix code $\Cmat$ of size $(m+\ell_1)\times (n+\ell_2)$ and dimension $mk$, the Decisional $\EGMC $-Indistinguishability $(k,m,n,\ell_1,\ell_2)$ problem asks to decide with non-negligible advantage whether $\Cmat$ sampled from the $\EGMC (k,m,n,\ell_1,\ell_2)$ distribution or the uniform distribution over the set of $\Fq$-subspaces of $\Fq^{(m+\ell_1)\times (n+\ell_2)}$ of dimension $mk$.
\end{definition}

\begin{definition}[$\EGMC $-Search problem]
\label{egmc-s}
Let $k,m,n,\ell\in\mathbb{N}$ be such that $k\leq n\leq m$, and $\Cmat$ sampled from the $\EGMC (k,m,n,\ell)$ distribution. The $\EGMC $-Search problem asks to retrieve the basis $\gamma\in\mathcal{B}(\Fqm)$ and the evaluation vector $\gv\in\Fqm^n$ used to construct $\Cmat$.
\end{definition}


\textbf{Claim.}  {\em Given a vector code $\Cvec \subseteq \Fqm^n$ and an $\Fq$--basis $\gamma$ of $\Fqm$. Then, 
   the matrix code
  $\Cmat = \Psi_{\gamma}(\Cvec)$ is
  distinguishable from a random matrix code in polynomial time.}


Indeed, even without knowing $\gamma$, the $\Fqm$--linear structure
can be detected by computing the left stabilizer algebra of
$\Cmat$. That is to say
  \[
    \text{Stab}_L(\Cmat) \eqdef \left\{\Pv \in \Fq^{m \times m} ~|~ \forall \Cv \in \Cmat,\ \Pv \Cv \in \Cmat \right\}.
  \]
  For a random matrix code, with high probability this algebra has dimension $1$ 
  and only contains the matrices $\lambda \Iv$ where $\lambda \in \Fq$
  and $\Iv$ denotes the $m\times m$ identity matrix. For a code $\Cmat$ which comes from an $\Fqm$--linear code, the
  algebra $\text{Stab}_L (\Cmat)$ contains a sub-algebra isomorphic to
  $\Fqm$ and hence has dimension at least $m$. Since the computation
  of the stabilizer algebra can be done by solving a linear system,
  this yields a polynomial time distinguisher.
  
  \begin{remark}
    Actually, using tools described in \cite{CDG20} it is even
    possible to do more than distinguishing and recover a description
    of $\Cvec$ as a vector code.
  \end{remark}
  
\subsubsection{Alternative approach: deleting rows and columns.}
We could have considered another masking procedure which consists in deleting rows and columns of the matrix code rather than adding to them random rows and columns. Deleting rows or columns is also an option which blurs the (right and left) stabilizer algebras. If some columns are removed, we obtain a punctured Gabidulin code, which remains a Gabidulin code, and hence can be decoded. If some rows are also removed, then the decryption will require to correct both errors and erasures, which forces to reduce the amount of errors during the encryption process and then reduces the security with respect to generic decoding attacks.


\section{Generic McEliece and Niederreiter frames based on MinRank}
\label{new_frames}

The McEliece frame is a generic process to construct code-based encryption schemes, proposed in \cite{M78}. It consists in masking a generator matrix of a vector code for which we know an efficient decoding algorithm with a secret operation. For an opponent who does not know the secret transformation, decrypting the ciphertext is as difficult as decoding a general linear code.

We propose here an adaptation of the McEliece frame to matrix codes: rather than hiding the generator matrix of a vector code, we propose to hide a basis of a secret matrix code, for which there exists an efficient decoding algorithm. The resulting scheme can be found in Figure \ref{mr-mce}.

\begin{figure}[h]
    \pcb[codesize=\scriptsize, minlineheight=0.75\baselineskip, mode=text, width=0.98\textwidth] { 
    $\KeyGen (1^\lambda)$: \\
    \pcind - Select a matrix code $\Cmat$ of size $m\times n$ and dimension $K$ on $\Fq$, with an efficient algorithm capable of decoding up to $r$ errors.\\
    \pcind - Let $\mathcal{T}$ a transformation which turns a matrix code into an other one with a trapdoor. Define the code $\Cmat' = \mathcal{T}(\Cmat)$.\\
    \pcind - Compute $\mathcal{B}=(\Mv_1,...,\Mv_K)$ a basis of $\Cmat'$.\\
    \pcind - Return $\pk=\mathcal{B}$ and $\sk=\left(\Cmat, \mathcal{T}^{-1}\right)$.\\[\baselineskip]
  $\Encrypt (\pk,\mu)$: \\
  	\pcind \textit{Input:} $\pk = (\Mv_1,...,\Mv_K)$, $\mu\in\Fq^K$.\\
  	\pcind - Sample uniformly at random a matrix $\Ev\in\Fq^{m\times n}$ such that $\Rank\Ev\leq r$.\\
  	\pcind - Return $\Yv = \sum_{i=1}^K \mu_i\Mv_i + \Ev$.\\[\baselineskip]
  	$\Decrypt (\sk,\Yv)$: \\
  	\pcind - Compute $\tilde{\Yv}=\mathcal{T}^{-1}(\Yv)$.\\
  	\pcind - Apply the decoding algorithm of $\Cmat$ on the matrix $\tilde{\Yv}$ to retrieve the message $\mu$.
  	}
\vspace{-\baselineskip}
\captionof{figure}{\footnotesize{MinRank-McEliece encryption frame}}
\label{mr-mce}
\end{figure}

\textbf{Comments.} Not just any transformation can be chosen: it must be compatible with the decoding of the noisy matrix $\tilde{\Yv}$.\\
An opponent which does not know the original code $\Cmat$ must solve a general instance of the MinRank problem to retrieve the message $\mu$, that is at least as difficult as the decoding problem in rank metric. Therefore, it guarantees a security at least as high as that of the standard McEliece frame for the same sizes of ciphertext and public key.

The Niederreiter frame is a variant of the McEliece frame for code-based encryption, which consists in hiding a parity check matrix rather than a generator matrix. It allows to obtain syndromes as ciphertexts, which are shorter than noisy codewords for an equivalent security, since the MinRank problem and the MinRank-Syndrome problem are equivalent. 
The resulting scheme can be found in Figure \ref{mr-nie}.

\begin{figure}[h]
    \pcb[codesize=\scriptsize, minlineheight=0.75\baselineskip, mode=text, width=0.98\textwidth] { 
    $\KeyGen (1^\lambda)$: \\
     \pcind - Select a matrix code $\Cmat$ of size $m\times n$ and dimension $K$ on $\Fq$, with an efficient algorithm capable of decoding up to $r$ errors.\\
    \pcind - Let $\mathcal{T}$ a transformation which turns a matrix code into an other one with a trapdoor. Define the code $\Cmat' = \mathcal{T}(\Cmat)$.\\
    \pcind - Compute $\bar{\Hv}\in\mathcal{M}_{(mn-K)\times mn}(\Fq)$ a parity check matrix of $\Cmat'$.\\
    \pcind - Return $	\mathbf{pk} = \bar{\Hv}$ and $\sk=\left(\Cmat, \mathcal{T}^{-1}\right)$.\\[\baselineskip]
  $\Encrypt (\pk,\mu)$: \\
  \pcind \textit{Input:} $\pk=\bar{\Hv}$, a message $\mu \in \Fq^{nm}$ such that $\rank \Fold(\mu)\leq r$.\\
  	\pcind - For every integer $i$ from 1 to $nm$, let $\hv_i$ the i-th column of $\bar{\Hv}$. \\
	\pcind - Return $\cv = \sum_{i = 1}^{nm} \mu_i \hv_i^t$.\\[\baselineskip]
	$\Decrypt (\sk,\cv)$:\\
    \pcind \textit{Input:} $\sk=(\Cmat, \Pv, \Qv)$, $\cv\in\Fqm^{nm-K}$.\\
    \pcind - Find any $\yv\in\Fq^{nm}$ such that $\cv =\sum_{i=1}^{nm}\bar{y}_i \hv_i^t$.\\
    \pcind - Let $\Yv = \mathcal{T}^{-1}(\Fold (\yv))$. Apply the decoding algorithm of $\Cmat$ on the matrix $\Yv$ to retrieve the error $\mu$.
  	}
\vspace{-\baselineskip}
\captionof{figure}{\footnotesize{MinRank-Niederreiter encryption frame}}
\label{mr-nie}
\end{figure}

\textbf{Comments.} Since only the original matrix code is equipped with an efficient decoding algorithm, it is necessary to retrieve a noisy word whose syndrome is the ciphertext $\cv$. Unlike the previous scheme, the message $\mu$ corresponds to the error of the noisy codeword.



\section{New encryption schemes}
\label{nes}

\subsection{EGMC-McEliece encryption scheme}

We apply the encryption frames we defined above in Section \ref{new_frames} to matrix Gabidulin codes, by using the trapdoor presented in Definition \ref{rrcmc}. We obtain an encryption scheme whose public key is an Enhanced Gabidulin Matrix code. The secret key contains the original Gabidulin code $\G$, the basis $\gamma$ on which the secret code $\G$ has been extended and the trapdoor. The resulting scheme is presented in Figure~\ref{sch-mce}.

\begin{figure}[h]
    \pcb[codesize=\scriptsize, minlineheight=0.75\baselineskip, mode=text, width=0.98\textwidth] { 
    $\KeyGen (1^\lambda)$: \\
    \pcind - Select a random $[m,k]_{q^m}$ Gabidulin code $\G$, with an efficient algorithm capable of decoding up to $\left\lfloor\frac{m-k}{2}\right\rfloor$ errors.\\
    \pcind - Sample uniformly at random a basis $\gamma\getsr\mathcal{B}(\Fqm)$.\\
    \pcind - Compute a basis $(\Av_1,...,\Av_{km})$ of the code $\Psi_\gamma (\G)$.\\
    \pcind - For $i$ in range 1 to $km$, sample uniformly at random: $\Rv_i\getsr\Fqml$, $\Rv'_i\getsr\Fqlm$ and $\Rv''_i\getsr\Fqll$.\\
    \pcind - Define the matrix code $\Cmat$ as explained in Definition \ref{def:egmc_distribution}.\\
    \pcind - Sample uniformly at random matrices $\Pv \getsr \mathbf{GL}_{m+\ell_1}(\Fq)$ and $\Qv \getsr \mathbf{GL}_{m+\ell_2}(\Fq)$.\\
    \pcind - Define the code $\Cmat' = \Pv\Cmat\Qv$.\\
    \pcind - Let $\mathcal{B}=(\Mv_1,...,\Mv_{km})$ be a basis of $\Cmat'$.\\
    \pcind - Return $\pk=\mathcal{B}$ and $\sk=(\G, \gamma, \Pv, \Qv)$.\\[\baselineskip]
  $\Encrypt (\pk,\mu)$: \\
  	\pcind \textit{Input:} $\pk = (\Mv_1,...,\Mv_{km})$, $\mu\in\Fq^{km}$.\\
  	\pcind - Sample uniformly at random a matrix $\Ev\in\Fq^{(m+\ell_1)\times (m+\ell_2)}$ such that $\Rank\Ev\leq r$.\\
  	\pcind - Return $\Yv = \sum_{i=1}^{km} \mu_i\Mv_i + \Ev$.\\[\baselineskip]
  	$\Decrypt (\sk,\Yv)$: \\
   \pcind - Compute $\Pv^{-1}\Yv\Qv^{-1}$. Truncate the $\ell_1$ last rows to obtain $\Mv\in\Fq^{m\times (m+\ell_2)}$.\\
  	\pcind - Compute $\Psi_\gamma^{-1}(\Mv)\in\Fq^{m+\ell_2}$.  Let $\yv$ be the first $m$ entries.\\
  	\pcind - Apply the decoding algorithm of $\G$ on the word $\yv$, it returns an error vector $\ev\in\Fqm^m$.\\
  	\pcind - Compute $\Ev' = \Psi_\gamma(\ev)$.\\
  	\pcind - Solve the linear system: $$ \Yv - (\Ev'|\Nv)\Qv = \sum_{i = 1}^{km} \mu_i \Mv_i $$ whose $(k+\ell_2)m$ unknowns in $\Fq$ are the $\mu_i$ and the remaining part of the error $\Nv$ of size $m\times\ell_2$.\\
  	\pcind - Return $\hat\mu$ the solution of the above system.
  	}
\vspace{-\baselineskip}
\captionof{figure}{\footnotesize{EGMC-McEliece encryption scheme}}
\label{sch-mce}
\end{figure}

\textbf{Comments.} The frame is valid since the multiplication by an invertible matrix makes the rank invariant. We can easily see that $\tilde{\Yv}$ is a noisy codeword of $\Cmat$, whose error is equal to $\Pv^{-1}\Ev\Qv^{-1}$. Since the multiplication by $\Pv^{-1}$ and $\Qv^{-1}$ does not change the rank of the matrix, we still have: $\Rank\Pv^{-1}\Ev\Qv^{-1}\leq r$.\\
The Enhanced code $\Cmat$ is not decodable on all its coordinates: the decoding algorithms allow to decode only the $n$ first coordinates, but the random noise map prevents from decoding the others. However, these $\ell$ coordinates can be recovered by solving a linear system of equations. On the other hand, using a matrix code, less structured than a vector code, increases the complexity of the attacks.\\
An opponent who wants to decrypt a ciphertext without retrieving the secret key has to solve a MinRank instance. Consequently, we choose in practice the same value for $\ell_1$ and $\ell_2$ in order to obtain square matrices, the algebraic attacks against the MinRank problem being less efficient in this case.

\begin{proposition}[Decryption correctness]
If the weight $r$ of the error is at most the decoding capacity $\left\lfloor\frac{m-k}{2}\right\rfloor$ of the code $\G$, the decryption algorithm outputs the correct message $\mu$.
\end{proposition}

\begin{proof}
  Let 
  $\Phi : \Fq^{(m+\ell_1) \times (m+\ell_2)} \rightarrow \Fq^{m \times
    m}$ be the map truncating matrices by removing the $\ell_1$ last
  rows and $\ell_2$ last columns. Then,
  \begin{equation}\label{eq:decryption}
    \ \tilde{\Yv} = \Phi(\Pv^{-1}\Yv\Qv^{-1}) = \sum_i \mu_i
    \Phi(\Pv^{-1}\Mv_i\Qv^{-1}) + \Phi(\Pv^{-1}\Ev\Qv^{-1}).
  \end{equation} Since
  $\Phi(\Pv^{-1}\Mv_1\Qv^{-1}), \dots, \Phi(\Pv^{-1}\Mv_{km}\Qv^{-1})$
  form an $\Fq$--basis of $\Cmat$ and\\
  $\Rank (\Phi(\Pv^{-1}\Ev\Qv^{-1}))\leq r$, applying decoding
  algorithm to \eqref{eq:decryption} yields $\mu$.  \qed
\end{proof}

\begin{proposition}\label{ow-cpa-mce}
    Under the assumption that there exists no PPT algorithm to solve the MinRank problem with non negligible probability and no PPT distinguisher for the $\EGMC $-Indistinguishability problem with non negligible advantage, then the scheme presented Figure \ref{sch-mce} is OW-CPA.
\end{proposition}

\begin{proof}
    Suppose there exists an efficient PPT decoder $\mA_{EGMC}$ of the EGMC-McEliece encryption scheme which takes as input the ciphertext and the public key, with a non negligible probability $\varepsilon (\lambda)$. Using this algorithm, we construct the following distinguisher $\mD$ for the $\EGMC $-Indistinguishability problem:

\begin{algorithmic}
\State \textbf{Input}: $\mathcal{B}= (\Mv_1,...,\Mv_{km})$
\State $\mu\getsr\Fq^{km}$
\State $\Ev\getsr\{\Xv\in\Fq^{(m+\ell_1)\times (m+\ell_2)}\vert\text{ rank }\Xv\leq r\}$
\State $\Tilde{\mu}\leftarrow\mA_{EGMC} (\mathcal{B},\sum\mu_i\Mv_i +\Ev)$
\If{$\mu=\Tilde{\mu}$}
\State \Return $1$
\Else
\State \Return $0$
\EndIf
\end{algorithmic}

Let $\mA_{MR}$ be an adversary against the MinRank problem. In the case where $\mathcal{B}$ is not a basis of an Enhanced Gabidulin matrix code, retrieving $\Ev$ is equivalent to solving a random MinRank instance. Then: \begin{align*}
    \Adv(\mD) &= \big\vert \Pr (\mD(\mathcal{B})=1\vert\mathcal{B}\text{ uniformly random})- \Pr (\mD(\mathcal{B})=1\vert\mathcal{B}\text{ basis of EGMC})\big\vert \\
    &= \big\vert \Succ (\mA_{MR}) - \Succ (\mA_{EGMC})\big\vert,
\end{align*}
since decrypting a ciphertext is strictly equivalent to retrieving the error $\Ev$ in the case where $\mathcal{B}$ is the basis of an EGMC. We deduce that: $$\Succ (\mA_{EGMC})\leq \Succ (\mA_{MR}) + \Adv(\mD).$$
Since $\Succ (\mA_{EGMC})\ge\varepsilon (\lambda)$, then $\Succ (\mA_{MR})\ge\frac{\varepsilon (\lambda)}{2}$ or $\Adv(\mD)\ge\frac{\varepsilon (\lambda)}{2}$, that are non negligible probabilities. We have proven that there exists either an efficient PPT algorithm to solve the MinRank problem, or an efficient PPT distinguisher to solve the EGMC-Indistinguishability problem, which allows to conclude.\qed
\end{proof}

\subsection{EGMC-Niederreiter encryption scheme}

It is also possible to adapt the above ideas to the Niederreiter frame, the dual version of the McEliece frame. It is then necessary to adapt the Syndrome Decoding approach to matrix codes, in the same way as what is presented in Subsection~\ref{sec_MR}. The resulting scheme can be found in Figure \ref{sch-nied}.

\begin{figure}[h!]
    \pcb[codesize=\scriptsize, minlineheight=0.75\baselineskip, mode=text, width=0.98\textwidth] { 
    $\KeyGen (1^\lambda)$: \\
    \pcind - Select a random $[m,k]_{q^m}$ Gabidulin code $\G$, with an efficient algorithm capable of decoding up to $\left\lfloor\frac{m-k}{2}\right\rfloor$ errors.\\
    \pcind - Sample uniformly at random a basis $\gamma\getsr\mathcal{B}(\Fqm)$.\\
    \pcind - Compute a basis $(\Av_1,...,\Av_{km})$ of the code $\Psi_\gamma (\G)$.\\
    \pcind - For $i$ in range 1 to $km$, sample uniformly at random: $\Rv_i\getsr\Fqml$, $\Rv'_i\getsr\Fqlm$ and $\Rv''_i\getsr\Fqll$.\\
    \pcind - Define the matrix code $\Cmat$ as explained above.\\
    \pcind -Sample uniformly at random matrices $\Pv \getsr \mathbf{GL}_{m+\ell_1}(\Fq)$ and $\Qv \getsr \mathbf{GL}_{m+\ell_2}(\Fq)$.\\
    \pcind - Define the code $\Cmat' = \Pv\Cmat\Qv$.\\
    \pcind - Compute $\bar{\Hv}\in\Fq^{((m+\ell_1)(m+\ell_2)-mk)\times (m+\ell_1)(m+\ell_2)}$ a parity check matrix of $\Cmat'$.\\
	\pcind - Return $	\mathbf{pk} = \bar{\Hv} \,;\, \mathbf{sk} = (\gamma, \G, \Qv)$\\[\baselineskip]
  $\Encrypt (\pk,\mu)$: \\
  	\pcind \textit{Input:} $\pk=\bar{\Hv}$, a message $\mu \in \Fq^{(m+\ell_1)(m+\ell_2)}$ such that $\rank \Fold(\mu)\leq r$.\\
  	\pcind - For every integer $i$ from 1 to $(m+\ell_1)(m+\ell_2)$, let $\hv_i$ be the i-th column of $\bar{\Hv}$. \\
	\pcind - Return $\cv = \sum_{i = 1}^{(m+\ell_1)(m+\ell_2)} \mu_i \hv_i^t$.\\[\baselineskip]
	$\Decrypt (\sk,\cv)$:\\
	\pcind \textit{Input:} $\mathbf{sk} = (\gamma, \G, \Qv)$, $\cv\in\Fq^{(m+\ell_1)(m+\ell_2)-mk}$.\\
	\pcind - Find any $\bar{\yv}\in\Fq^{(m+\ell_1)(m+\ell_2)}$ such that $\cv =\sum_{i=1}^{(m+\ell_1)(m+\ell_2)}\bar{y}_i \hv_i^t$.\\
    \pcind - Let $\Yv=\Fold(\bar{y})$. Compute $\Pv^{-1}\Yv\Qv^{-1}$. Truncate the $\ell_1$ last rows to obtain $\Mv\in\mathcal{M}_{m\times (m+\ell)}(\mathbb{F}_q)$.\\
    \pcind - Let $\yv$ be the first $m$ coordinates of $\Psi_\gamma^{-1}(\Mv)\in\Fq^{m+\ell_2}$.\\
	\pcind - Let $\yv=\mv\Gv+\ev$, with $\rank (\ev) \leq r$. Apply the decoding algorithm of $\G$ on the word composed of the $m$ first coordinates.\\
	\pcind - Let $\Ev= \Psi_\gamma(\ev)$. Then $\Ev=(\Ev'|\Nv)\in\mathcal{M}_{m\times (m+\ell_2)}(\mathbb{F}_q)$ is a matrix whose the $nm$ coefficients of $\Ev'$ are known, and $\Nv$ is the remaining part of the error.\\
	\pcind - Solve the linear system $\cv = \sum_{i=1}^{(m+\ell_1)(m+\ell_2)}\bar{e}_i \hv_i^t$ to find the $\ell m$ values of $\Nv$, with $\bar{\ev}=\Unfold (\Ev)$ with unknown some coefficients.\\
	\pcind - Return $\hat\mu=\Unfold(\Ev)$.
}
\vspace{-\baselineskip}
\captionof{figure}{\footnotesize{EGMC-Niederreiter encryption scheme}}
\label{sch-nied}
\end{figure}

\begin{proposition}[Decryption correctness]
If the weight $r$ of the error is under the decoding capacity $\left\lfloor\frac{m-k}{2}\right\rfloor$ of the code $\G$, the decryption algorithm outputs the correct message $\mu$.
\end{proposition}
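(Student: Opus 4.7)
The plan is to adapt the correctness argument of the EGMC-McEliece scheme, with one additional preliminary step to handle the syndrome formulation.

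First, I would observe that any $\bar\yv \in \Fq^{(m+\ell_1)(m+\ell_2)}$ satisfying $\cv = \bar\yv\bar\Hv^t$ differs from $\mu$ by an element of the code: since $\bar\Hv$ is a parity check matrix of $\Cmat'$, $\Fold(\bar\yv - \mu) \in \Cmat'$. Applying $\Pv^{-1}\cdot\Qv^{-1}$ therefore yields an element of $\Cmat$, which by construction has the structured form $\begin{pmatrix}\Av & \Rv\\ \Rv' & \Rv''\end{pmatrix}$ with $\Av \in \Psi_\gamma(\G)$.

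Next, I would mimic the decoding step of the McEliece proof. Truncating the last $\ell_1$ rows of $\Pv^{-1}\Fold(\bar\yv)\Qv^{-1}$ produces $(\Av \mid \Rv) + \Fv$, where $\Fv$ denotes the top $m$ rows of the noise $\Pv^{-1}\Fold(\mu)\Qv^{-1}$. Since right- and left-multiplications by invertible matrices preserve rank, while row truncation can only decrease it, one has $\rank(\Fv) \leq r$. Restricting to the first $m$ columns and applying $\Psi_\gamma^{-1}$ columnwise then yields a vector in $\Fqm^m$ equal to the Gabidulin codeword $\Psi_\gamma^{-1}(\Av)$ plus an error of rank weight at most $r$, which the decoder of $\G$ recovers. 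This recovers in particular the first $m$ columns of $\Fv$, i.e.\ an $m \times m$ block of the noise in the rotated frame.

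Finally, injecting this partial knowledge into the syndrome equation $\cv = \mu\bar\Hv^t$ produces a linear system in the remaining unknown entries of $\mu$, which the decryption algorithm solves. The main, and only non-routine, point is to show that this system has a unique solution. This reduces to the injectivity of the $\Fq$-linear map $\Cmat' \to \Fq^{m \times m}$ sending $\Cv'$ to the top-left $m\times m$ block of $\Pv^{-1}\Cv'\Qv^{-1}$, equivalently of the projection $\Cmat \to \Fq^{m\times m}$ picking out the top-left block. The latter sends the basis element $\begin{pmatrix}\Av_i & \Rv_i\\ \Rv'_i & \Rv''_i\end{pmatrix}$ of $\Cmat$ to $\Av_i$; since the matrices $\Av_1,\dots,\Av_{km}$ are by construction a basis of $\Psi_\gamma(\G)$ and $\dim_{\Fq} \Psi_\gamma(\G) = mk = \dim \Cmat$, this projection is a bijection onto $\Psi_\gamma(\G)$. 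Hence $\mu$ is uniquely determined, and the decryption procedure outputs it correctly.
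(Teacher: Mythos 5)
Your proof is correct and follows essentially the same route as the paper's: pick any preimage $\bar\yv$ of the syndrome, note it differs from the true error by a codeword of $\Cmat'$, rotate by $\Pv^{-1}\cdot\Qv^{-1}$ and truncate to expose a noisy Gabidulin codeword of rank error at most $r$, decode it, and finish with a linear system for the remaining entries. Where you go beyond the paper is the last step: the paper justifies the final linear system only by counting equations against unknowns ("possible since $k\leq m$"), which by itself guarantees neither solvability nor, more importantly, uniqueness. Your injectivity argument --- that the projection of $\Cmat$ onto the top-left $m\times m$ block sends the basis $\bigl(\begin{smallmatrix}\Av_i & \Rv_i\\ \Rv'_i & \Rv''_i\end{smallmatrix}\bigr)$ to the basis $(\Av_i)$ of $\Psi_\gamma(\G)$, and is therefore a bijection since both spaces have dimension $mk$ --- is exactly the missing justification that two error candidates with the same syndrome and the same recovered block must coincide. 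This is a welcome tightening rather than a different proof strategy.
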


\begin{proof}
Let $\bar{\Gv}\in\mathcal{M}_{mk\times (m+\ell_1)(m+\ell_2)}(\Fq)$ be a generator matrix of the associated unfolded $\Cmat'$ vector code, and $\bar{\Hv}\in\mathcal{M}_{((m+\ell_1)(m+\ell_2)-k)\times (m+\ell_1)(m+\ell_2)}(\Fq)$ a parity check matrix. Let $\bar{\yv}\in\Fq^{(m+\ell_1)(m+\ell_2)}$ be such that $\cv =\sum_{i=1}^{(m+\ell_1)(m+\ell_2)}\bar{y}_i\Mv_i$. Then:
$$\Unfold (\cv) = \sum_{i=1}^{(m+\ell_1)(m+\ell_2)}\bar{y}_i\Unfold(\Mv_i)=\bar{\yv}\bar{\Hv}^t = \bar{\ev}\bar{\Hv}^t$$
where $\bar{\ev}=\Unfold ({\Psi_\gamma (\ev)})$.

The $nm$ first values of $\bar{\ev}$ can be retrieved by decoding a word of the noisy Gabidulin code $\Psi_\gamma^{-1}(\Cmat')$. The rank $r$ of the error must not exceed the decoding capacity.

The $\ell_2 m$ last values can be retrieved by solving a linear system of $(m-k+\ell_2)m$ equations, which is possible since $k\leq m$. Multiplying by $\Qv^{-1}$ allows to retrieve the original message $\mu$.\qed
\end{proof}

\begin{proposition}
    Under the assumption that there exists no PPT algorithm to solve the MinRank-Syndrome problem with non negligible probability and no PPT distinguisher for the $\EGMC $-Indistinguishability problem with non negligible advantage, then the scheme presented Figure \ref{sch-nied} is OW-CPA.
\end{proposition}

\begin{proof}
    This proof is similar to that of Proposition \ref{ow-cpa-mce}. Let $\mA_{EGMC-N}$ be an efficient PPT adversary for the EGMC-Niederreiter encryption scheme. By considering the following distinguisher for the $\EGMC$-Indistinguishability problem:

\begin{algorithmic}
\State \textbf{Input}: $\bar{\Hv}\in\Fq^{((m+\ell_1)(m+\ell_2)-mk)\times (m+\ell_1)(m+\ell_2)}$
\State $\mu\getsr\{\mv\in\Fq^{(m+\ell_1)(m+\ell_2)}\vert\rank \Fold(\mv)\leq r\}$
\State $\Tilde{\mu}\leftarrow\mA_{EGMC-N} (\bar{\Hv},\sum\mu_i \hv_i^t)$
\If{$\mu=\Tilde{\mu}$}
\State \Return $1$
\Else
\State \Return $0$
\EndIf
\end{algorithmic}

one demonstrates by identical reasoning that there exists either an efficient PPT algorithm to solve the MinRank-Syndrome problem, or an efficient PPT distinguisher to solve the EGMC-Indistinguishability problem.\qed
\end{proof}

\subsection{Algorithms complexity}

The most costly step for the key generation is the multiplication by the matrices $\Pv$ and $\Qv$. It involves computing $km$ products of the form $\Pv \Av_i \Qv$, which requires $km ((m + \ell_1)^2 (m + \ell_2) + (m + \ell_2)^2 (m + \ell_1))$ multiplications in $\Fq$. For the Neiderreiter variant, we must also consider the cost of computing the matrix $\bar{\Hv}$, which can be done using Gaussian elimination, giving a complexity of $\mathcal{O}(((m + \ell_1)(m + \ell_2))^3)$ operations in $\Fq$.

The encryption step is the cheapest as we only need to compute the ciphertext $\cv$ by adding $(m + \ell_1)(m + \ell_2)$ elements of $\Fq^{(m + \ell_1)(m + \ell_2) - km}$, each requiring $(m + \ell_1)(m + \ell_2) - km$ multiplications in $\Fq$ for the multiplication by $\mu_i$ in the Neiderreiter variant. For the McEliece variant, computing the $\mu_i\Mv_i$ requires $km(m + \ell_1)(m + \ell_2)$ multiplications in $\Fq$.

The decryption process requires solving one (or two in the case of the Neiderreiter variant) linear system with $(m + \ell_1)(m + \ell_2)$ unknowns in $\Fq$. This gives a complexity of $\mathcal{O}(((m + \ell_1)(m + \ell_2))^3)$ operations in $\Fq$.

\section{Security analysis}\label{sec:secu}

This section deals with the security of the schemes proposed in Section \ref{nes}. There exist two types of attacks. The first one are the structural attacks which consist in retrieving the structure of the secret code from the information given in the public key. It amounts to recovering the secret key by solving the $\EGMC$-Search problem (see Definition \ref{egmc-s}), and efficiently decrypting any ciphertext thanks to the underlying decoding algorithm of the secret code. The second one includes attacks which try to recover the message from the ciphertext and the public key, that is equivalent to solving an instance of the MinRank problem.

\subsection{Attacks on the key}

We recall that an opponent who wants to retrieve the secret key has to solve the $\EGMC$-Search problem:\\
\textbf{Instance: }A matrix code $\Cmat$ sampled from the $\EGMC (k,m,n,\ell_1, \ell_2)$ distribution.\\
\textbf{Problem: }Retrieve the basis $\gamma\in\mathcal{B}(\Fqm)$ and the evaluation vector $\gv\in\Fqm^n$ of the Gabidulin code $\G$ used to construct $\Cmat$.

In this section, we present two algorithms for solving the $\EGMC$ decision problem. Just like other distinguishers for structured codes in rank metric (e.g. LRPC codes), there are two types of attacks. The first one is of combinatorial nature and tries to detect the $\Fqm$-linear structure, whereas the second one is an algebraic distinguisher that is inspired from the Overbeck attack.

Before that, we present polynomial distinguishers for vector and matrix plain Gabidulin codes, that are not directly usable for $\EGMC$ codes considered in our schemes, since they are enhanced with perturbing random rows and columns that increase the cost of a distinguisher.

\subsubsection{Distinguisher for vector Gabidulin codes.}
\label{str-att}

Due to their strong algebraic structure, Gabidulin codes can be easily distinguished from random linear codes. As it has been proven, the security of the schemes relies on the difficulty to distinguish an Enhanced Gabidulin vector code (see Proposition \ref{ow-cpa-mce}).

Let $\xv = (x_1,\dots,x_n)\in\Fqm^n$. For any $i\in\{0,\dots,m-1\}$, we define: $$\xv^{[i]}=(x_1^{q^i},\dots,x_n^{q^i}).$$
This definition naturally extends to codes. For a vector space $E\subset\Fqm^n$, we write $E^{[1]}$ the image of $E$ by the Frobenius application. We can generalize this notation to $E^{[i]}$ for $i$ compositions of the Frobenius.

\begin{definition}
    Let $\C$ be an $[n,k]_{q^m}$ linear code. We define the $f$-th Frobenius sum of $\C$ as: $$\Lambda_f (\C) = \C + \C^{[1]} + \cdots + \C^{[f]}.$$
\end{definition}

If $\Gv$ is a generator matrix of $\C$, then a generator matrix for $\Lambda_f (\C)$ is: $$\begin{pmatrix}
    \Gv \\
    \Gv ^{[1]}\\
    \vdots \\
    \Gv ^{[f]}
\end{pmatrix} \in \Fqm^{(f+1)k\times n}.$$
For convenience, we abusively denote this matrix as $\Lambda_f (\Gv).$

\begin{proposition}[\cite{GOT18}]
    Let $\G$ be an $[n,k]_{q^m}$ Gabidulin code. For any $f\ge 0$: $$\dim\Lambda_f (\G) = \min \{n,k+f\}.$$
    Let $\C$ be a random $[n,k]_{q^m}$ linear code. For any $f\ge 0$, we have with high probability: $$\dim\Lambda_f (\C) = \min \{n,k(f+1)\}.$$
\end{proposition}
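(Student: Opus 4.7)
The plan is to prove the two equalities separately, with the Gabidulin case reducing to an explicit Moore matrix computation and the random case relying on a standard probabilistic argument.

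For the Gabidulin code $\G = \mathcal{G}_{\gv}(n,k,m)$, I would write down an explicit generator matrix for $\Lambda_f(\G)$. The canonical generator matrix is the Moore matrix $\Gv$ whose $(i,j)$-entry is $g_j^{q^i}$ for $0 \le i \le k-1$ and $1 \le j \le n$. Applying the Frobenius coordinate-wise yields $\Gv^{[s]} = (g_j^{q^{i+s}})_{0 \le i \le k-1,\, 1 \le j \le n}$. Vertically concatenating $\Gv, \Gv^{[1]}, \dots, \Gv^{[f]}$ produces a matrix whose set of distinct rows is exactly $\{(g_1^{q^t}, \dots, g_n^{q^t}) : 0 \le t \le k+f-1\}$, i.e., the rows of a generator matrix of $\mathcal{G}_{\gv}(n, k+f, m)$ when $k+f \le n$. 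The classical fact that a Moore matrix built on $\Fq$-linearly independent elements has full row rank (equal to $\min\{n, t\}$ on $t$ such rows) then gives $\dim \Lambda_f(\G) = \min\{n, k+f\}$ in all cases.

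For the random code $\C$, the upper bound $\dim \Lambda_f(\C) \le \min\{n, k(f+1)\}$ is immediate, since $\Lambda_f(\C)$ is spanned by the $k(f+1)$ rows of $\Gv, \Gv^{[1]}, \dots, \Gv^{[f]}$ sitting inside $\Fqm^n$. For the lower bound, the plan is probabilistic: treat $\Gv$ as a uniformly random element of $\Fqm^{k \times n}$ (which is legitimate since replacing $\Gv$ by $\Sv \Gv$ for $\Sv \in \mathbf{GL}_k(\Fqm)$ does not alter $\Lambda_f(\C)$) and bound the probability that $\Lambda_f(\Gv)$ fails to reach the expected rank. A nontrivial $\Fqm$-linear dependence among its rows translates into a system of algebraic conditions on the entries of $\Gv$ that cannot be satisfied identically, and a union bound over a basis of candidate dependence patterns yields failure probability of order $q^{-m}$ or smaller.

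I expect the main technical obstacle to lie in the second part: while intuitively Frobenius twists of a generic code should be independent, the Frobenius is only $\Fq$-linear and not $\Fqm$-linear, so one must carefully track how $\Fqm$-linear dependencies between the stacked rows translate into $\Fq$-algebraic constraints on $\Gv$, and argue that these carve out a proper subvariety of $\Fqm^{k \times n}$. The Gabidulin half, in contrast, should be essentially immediate once the explicit Moore-matrix generator of $\Lambda_f(\G)$ is displayed.
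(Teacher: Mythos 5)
This proposition is imported from \cite{GOT18}: the paper states it with a citation and supplies no proof of its own, so there is nothing internal to compare against and your argument has to be judged on its own terms. The Gabidulin half of your plan is correct and is the standard argument: stacking the Moore matrix $\Gv$ with its Frobenius twists $\Gv^{[1]},\dots,\Gv^{[f]}$ yields, after removing repeated rows, the Moore matrix on exponents $q^0,\dots,q^{k+f-1}$, i.e.\ a generator matrix of $\mathcal G_{\gv}(n,k+f,m)$, whose rank is $\min\{n,k+f\}$ because the $g_j$ are $\Fq$-linearly independent. (One small remark: when $k+f>m$ the exponents wrap around modulo $m$ and rows repeat, but since $n\le m$ the first $n$ rows already give rank $n$, so the formula survives.)

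The random half is where your sketch has a real gap. A union bound ``over a basis of candidate dependence patterns'' does not make sense: linear dependence of the $k(f+1)$ stacked rows is witnessed by an arbitrary nonzero $\lambda\in\Fqm^{k(f+1)}$, and you cannot restrict attention to a basis of that space. The workable version is: for a fixed nonzero $\lambda=(\lambda_{i,s})$, the dependence condition reads $\sum_i P_i(\Gv_{i,j})=0$ for every column $j$, where $P_i(x)=\sum_s\lambda_{i,s}x^{q^s}$ are $q$-polynomials of $q$-degree at most $f$, not all zero; the map $(x_1,\dots,x_k)\mapsto\sum_iP_i(x_i)$ is $\Fq$-linear with image of $\Fq$-dimension at least $m-f$, so a uniformly random column satisfies the condition with probability at most $q^{-(m-f)}$, and the $n$ independent columns give $q^{-n(m-f)}$ per $\lambda$. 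The union over all $q^{mk(f+1)}$ choices of $\lambda$ then yields failure probability at most $q^{mk(f+1)-n(m-f)}$, which is small precisely when $k(f+1)<n(1-f/m)$ --- fine for $f\ll m$, but this only covers the regime $k(f+1)\le n$. When $k(f+1)>n$ the claim is that $\Lambda_f(\C)$ fills $\Fqm^n$, and your dependence-pattern argument says nothing about that case; you need a separate argument (e.g.\ restrict to a sub-collection of $n$ of the stacked rows, or work with the column space / a maximal minor and Schwartz--Zippel, for which you must also exhibit one witness $\Gv$ achieving full rank, and a Gabidulin generator matrix is \emph{not} such a witness). These are fixable defects, but as written the second half does not close.
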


\textit{Remark 1.} Let $f\in\{0,\dots,n-k\}$. If $\G$ is a Gabidulin code $[n,k]_{q^m}$ with evaluation vector $\gv$, then $\Lambda_f (\G)$ is a Gabidulin code $[n,n-1]_{q^m}$ with the same evaluation vector $\gv$.

\textit{Remark 2.} Another way to distinguish a Gabidulin code from a random one is to observe the intersection: $\C\cap\C^{[1]}=\{0\}$ with high probability for a random code, whereas $\dim (\G\cap\G^{[1]})=k-1$ for any Gabidulin code.

Since the rank of a matrix can be computed in polynomial time, this noteworthy behavior allows to easily distinguish a Gabidulin code from a random code. It has been exploited by Overbeck. See \cite{O05,O08} for some structural attacks against cryptographic schemes based on Gabidulin codes.



\subsubsection{Distinguisher for matrix Gabidulin codes.}

As already mentioned in the
preliminary part, starting from an $\Fqm$--linear code
$\Cvec \subseteq \Fqm^n$, the map $\Psi_\gamma$ sending it on a matrix
code is not enough to hide it from a random matrix code since
it has a non trivial left stabilizer algebra.

When $n = m$, the following statement suggests the
existence of another detectable structure.

\begin{lemma}
  Suppose that $n= m$ and let $\gv \in \Fqm^m$ whose entries are
  $\Fq$--linearly independent. let $\gamma$ be an $\Fq$--basis of
  $\Fqm$. Then, $\Psi_{\gamma} (\Gabcode{\gv}{m,k,m})$ has a right stabilizer
  algebra of dimension $\geq m$.
\end{lemma}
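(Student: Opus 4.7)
The plan is to exhibit an injective $\Fq$-algebra embedding of $\Fqm$ into the right stabilizer algebra $\text{Stab}_R(\Psi_\gamma(\Gabcode{\gv}{m,k,m})) \eqdef \{\Qv \in \Fq^{m\times m} \mid \forall \Cv \in \Psi_\gamma(\Gabcode{\gv}{m,k,m}),\ \Cv\Qv \in \Psi_\gamma(\Gabcode{\gv}{m,k,m})\}$, which will immediately yield the lower bound $m$ on its dimension.

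Since the entries of $\gv$ are $\Fq$-linearly independent and $n = m$, they form an $\Fq$-basis of $\Fqm$. First I would observe that for any $\alpha \in \Fqm$, the multiplication-by-$\alpha$ map is $\Fq$-linear, so there exists a unique $M_\alpha \in \Fq^{m\times m}$ such that $\alpha \gv = \gv \Mv_\alpha$; this is the matrix of multiplication by $\alpha$ in basis $\gv$, i.e. the regular representation of $\Fqm$. A direct check shows that $\alpha \mapsto \Mv_\alpha$ is an injective $\Fq$-algebra homomorphism $\Fqm \hookrightarrow \Fq^{m\times m}$.

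Next I would show that each $\Mv_\alpha$ right-stabilizes the vector Gabidulin code. For any $q$-polynomial $P$ of $q$-degree $<k$, the $\Fq$-linearity of $P$ applied coordinate-wise gives $P(\gv) \Mv_\alpha = P(\gv \Mv_\alpha) = P(\alpha \gv)$. Writing $P(X) = \sum_j p_j X^{q^j}$, one has $P(\alpha g_i) = \sum_j p_j \alpha^{q^j} g_i^{q^j} = P'(g_i)$, where $P'(X) \eqdef \sum_j p_j \alpha^{q^j} X^{q^j}$ is again a $q$-polynomial of $q$-degree $<k$. Hence $P(\gv)\Mv_\alpha = P'(\gv) \in \Gabcode{\gv}{m,k,m}$. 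This is the main technical step, and the heart of the argument: it critically uses both $\Fq$-linearity of $q$-polynomials and the fact that $\alpha \gv$ can be rewritten as $\gv \Mv_\alpha$ with $\Mv_\alpha \in \Fq^{m\times m}$, which is exactly where the hypothesis $n = m$ is exploited.

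Finally, I would transfer this to the matrix code. Since $\Mv_\alpha$ has entries in $\Fq$ and $\Psi_\gamma$ is $\Fq$-linear acting column-wise, one checks that $\Psi_\gamma(\xv)\Mv_\alpha = \Psi_\gamma(\xv \Mv_\alpha)$ for every $\xv \in \Fqm^m$. Combined with the previous step, this shows $\Mv_\alpha \in \text{Stab}_R(\Psi_\gamma(\Gabcode{\gv}{m,k,m}))$ for every $\alpha \in \Fqm$. Since $\alpha \mapsto \Mv_\alpha$ is an injective $\Fq$-linear map, the image has $\Fq$-dimension $m$, and the right stabilizer algebra has dimension $\geq m$. \qed
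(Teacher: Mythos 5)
Your proposal is correct and follows essentially the same route as the paper's proof: the matrix $\Mv_\alpha$ you construct is exactly the paper's ``matrix representing the multiplication by $\alpha$ in the basis $\gv$,'' and your computation $P(\alpha g_i)=\sum_j p_j\alpha^{q^j}g_i^{q^j}$ is the paper's observation that the space of $q$-polynomials of $q$-degree $<k$ is stable under right composition with $\alpha X$. The only difference is presentational: you work with explicit evaluation vectors and the identity $\Psi_\gamma(\xv\Mv_\alpha)=\Psi_\gamma(\xv)\Mv_\alpha$, whereas the paper interprets $\Psi_\gamma(P(\gv))$ directly as the matrix of the endomorphism $P$ from basis $\gv$ to basis $\gamma$.
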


\begin{proof}
  A $q$--polynomial $P$ of $q$--degree $< k$ induces an
  $\Fq$--endomorphism of $\Fqm$ and
  $\Psi_\gamma((P(g_1), \dots, P(g_m)))$ is the matrix representation
  of this endomorphism from the basis $\gv = (g_1, \dots, g_m)$ to the
  basis $\gamma$.
  Then, observe that the space of $q$--polynomials of degree $< k$ is
  stable by right composition by any $q$--polynomial $\alpha X$ for
  $\alpha \in \Fqm$.  Indeed, for any
  $P = p_0X + p_1 X^q + \cdots + p_{k-1}X^{q^{k-1}}$ with $q$--degree
  $<k$, we have
  \[
    P \circ \alpha X = p_0 \alpha X + p_1 \alpha^q X^q + \cdots +
    p_{k-1} \alpha^{q^{k-1}}X^{q^{k-1}}
  \]
  and the resulting $q$--polynomial has the same degree.

  The stability of this space of $q$--polynomials by right composition
  by $\alpha X$ entails that $\Psi_{\gamma}(\Gabcode{\gv}{m,k,m})$ is
  stabilized on the right by the matrix representing the
  multiplication by $\alpha$ (regarded as an $\Fq$--endomorphism of
  $\Fqm$) in the basis $\gv$.
  
  In summary, the right stabilizer algebra contains a sub-algebra
  isomorphic to $\Fqm$ and hence has dimension larger than or equal to
  $m$. \qed
\end{proof}

The goal of the masking we propose is to mask both left and right
$\Fqm$--linear structure of a Gabidulin code, leading to trivial
left and right stabilizer algebras.

\subsubsection{A combinatorial distinguisher detecting the
  $\Fqm$--linear structure.}
Suppose from now on that we are given a code $\Cmat$ which is an
enhanced Gabidulin matrix code. We keep the notation of
Definitions~\ref{egmc} and \ref{def:egmc_distribution}.  That is to
say our code that we denote $\Cpub$ is described by a basis:
\begin{equation}\label{eq:RBpub}
  \RBpub = \Biggl(\Pv\begin{pmatrix}
    \Av_1 & \Rv_1 \\
    \Rv'_1 & \Rv''_1
\end{pmatrix}\Qv,\dots,\Pv\begin{pmatrix}
    \Av_{km} & \Rv_{km} \\
    \Rv'_{km} & \Rv''_{km}
\end{pmatrix}\Qv\Biggr),
\end{equation}
where the $\Av_{ij}$'s are an $\Fq$--basis of
$\Psi_{\gamma}(\Gabcode{\gv}{m,k,n})$, $\Pv, \Qv$ are random
nonsingular matrices and the $\Rv_i$, $\Rv'_i$, $\Rv_i''$s are random
matrices with respective sizes $m \times \ell_2$, $\ell_1 \times n$
and $\ell_1 \times \ell_2$. Roughly speaking, this is a Gabidulin
matrix code, enhanced with $\ell_1$ random rows and $\ell_2$ random
columns. The matrix $\Pv$ (resp. $\Qv$) ``mixes'' rows (resp. columns)
from the Gabidulin code with random ones.

We also introduce the secret ``non-scrambled'' version of the code
denoted $\C_0$ and spanned by the basis:
\begin{equation}\label{eq:C_0}
  \RB_0 = \Biggl(\begin{pmatrix}
    \Av_1 & \Rv_1 \\
    \Rv'_1 & \Rv''_1
\end{pmatrix},\dots,\begin{pmatrix}
    \Av_{km} & \Rv_{km} \\
    \Rv'_{km} & \Rv''_{km}
\end{pmatrix}\Biggr).
\end{equation}

The idea of the combinatorial distinguisher to follow consists in
applying a projection map on both the row and columns spaces of
$\Cpub$ in order to get rid of the contributions of the matrices
$\Rv_i, \Rv_i'$ and $\Rv_i''$ while not destroying the underlying
$\Fqm$--linear structure.

To understand the idea, let us first reason on the non scrambled code
$\C_0$. Choose two matrices
\begin{itemize}
\item $\Uv \in \Fq^{m \times (m+\ell_1)}$ of full rank whose $\ell_1$ last
  columns are $0$, {\em i.e.},
\[
  \Uv = (\Uv_0 ~|~ \mathbf{0}),\qquad \text{with\ \ } \Uv_0 \in \mathbf{GL}_{m}(\Fq).
\]
\item and, for some integer $n'$ such that $k < n' \leq n$, a full rank matrix
  $\Vv \in \Fq^{(n+\ell_2) \times n'}$
  whose last $\ell_2$ rows are $0$:
\[
  \Vv =
  \begin{pmatrix}
    \Vv_0 \\ \mathbf{0}
  \end{pmatrix}
  ,\qquad \text{with\ \ } \Vv_0 \in \Fq^{n\times n'}\ \text{of\
    full\ rank}.
\]
\end{itemize}
Now, observe that the code $\Uv \C_0 \Vv$ is spanned by
\[
\Uv_0 \Av_1 \Vv_0, \dots, \Uv_0 \Av_{km} \Vv_0,
\]
which is a basis of the matrix Gabidulin code
$\Psi_{\gamma'} (\Gabcode{\gv \Vv_0}{m,k,n'})$,
where $\gamma'$ is the image of the basis $\gamma$ by $\Uv_0$.
Hence, this code is distinguishable from random by computing its
left stabilizer algebra.

Note finally that the number of choices of $\Uv, \Vv$ is
$\approx q^{m^2 + nn'}$.  The latter quantity being minimal when
$n' = k+1$ (we should have $n' > k$ since otherwise the resulting
code would be the full matrix space $\Fq^{m\times n'}$).

Now, when considering the public code $\Cpub$ instead of $\C_0$ the very same
observation can be made by replacing $\Uv$ by $\Uv' \eqdef \Uv \Pv^{-1}$ and
$\Vv$ by $\Vv' \eqdef \Qv^{-1}\Vv$ and the number of good choices killing the contributions
of the $\Rv_i, \Rv_i', \Rv_i''$ matrices remains the same, namely:
$q^{m^2+n(k+1)}$.

Thus, the suggested distinguisher consists in repeating the following
operations:
\begin{itemize}
\item Guess the pair $\Uv', \Vv'$ with $\Uv' \in \Fq^{m \times (m+\ell_1)}$
  and $\Vv' \in \Fq^{(n+\ell_2)\times (k+1)}$,
\item Compute the left stabilizer algebra of $\Uv' \Cpub \Vv'$.
\end{itemize}
until you get a stabilizer algebra of dimension $\geq m$.

The probability of finding a valid pair $\Uv', \Vv'$ is
\[
  \mathbb P \approx \frac{q^{m^2 + n(k+1)}}{q^{m(m+\ell_1) + (n+\ell_2)(k+1)}}
  = q^{-(m\ell_1 + (k+1)\ell_2)}
\]
which yields a complexity of \begin{equation}\label{att:str}
\widetilde O ( q^{m\ell_1 + (k+1)\ell_2})
\end{equation}

for the distinguisher.

To conclude on this section let us do some remarks.
\begin{itemize}
\item The process is not symmetric on rows and columns since, on one
  hand $\Uv'$ must kill the random rows while preserving the
  $\Fqm$--linearity. On the other hand, the matrix $\Vv'$ should only
  kill the random columns, even if it partially punctures the
  Gabidulin code.
\item The above distinguisher holds when replacing the Gabidulin code
  by any $\Fqm$--linear code, it does not use the Gabidulin
  structure. In the subsequent section, the proposed algebraic attacks
  will try to take advantage of the Gabidulin structure.
\end{itemize}

\subsubsection{An Overbeck-like distinguisher.}
As already mentioned, the previous distinguisher does not actually
take advantage of the Gabidulin structure and could hold for any
$\Fqm$--linear code. Let us discuss how to take advantage of the
Gabidulin structure. The key of Overbeck's distinsguisher is that,
when given a Gabidulin code represented as a vector code $\Cvec$, one
can easily compute its image by the Frobenius map $\Cvec^{[1]}$ and
the sum $\Cvec + \Cvec^{[1]}$ is small (it has $\Fqm$--dimension
equal to $1+\dim_{\Fqm} \Cvec$) compared to what happens with a random
code. More generally, the sum $\Cvec + \Cvec^{[1]} + \cdots + \Cvec^{[t]}$
is small compared to the random case.

The difficulty in our setting is that we can access neither the
multiplication operation by an element of $\Fqm$ nor the action of
the Frobenius map. We suggest here to identify similar behaviours by
solving a system of bilinear equations. We do not claim that it is the
only manner to distinguish via the resolution of a bilinear
system. However, our observation is that all our attempts led to the
resolution of a quadratic system with $\Theta (m^2)$ unknowns for $\Theta (m^2)$
equations. A linearization would hence lead to $\Theta (m^4)$ unknowns for
only $\Theta (m^2)$ equations. A further analysis of this system or of any
other (and possibly smarter) algebraic modeling would be important to
better assess the security of the system.

In the sequel, we assume that the dimension of the public code
$\Cpub$ satisfies
\[
  2 \dim \Cpub = 2mk \geq \dim \Fq^{(m+\ell_1) \times (n + \ell_2)}.
\]  If this
condition is not satisfied, an algebraic distinguisher of similar
flavor can be searched on the dual code (see further for a discussion on
a structural attack on the dual).

The idea of this distinguisher relies on this
observation.

\begin{lemma}\label{lem:product_matrix_Gabidulin}
  Let $b$ be a non-negative integer.  Let
  $\Mv \in \Psi_{\gamma} (\Gabcode{\gamma}{m,b,m})$ and
  $\Cv \in \Psi_{\gamma}(\Gabcode{\gv}{m,k,n})$. Then,
  \[
    \Mv \Cv \in \Psi_{\gamma}(\Gabcode{\gv}{m, k+b-1, n}).
  \]
\end{lemma}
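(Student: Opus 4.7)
The plan is to unfold the definitions and identify $\Mv$ and $\Cv$ with the evaluation of $q$-polynomials, then use the fact that composition of $q$-polynomials adds $q$-degrees. More precisely, by definition of $\Psi_\gamma(\Gabcode{\gamma}{m,b,m})$, there is a $q$-polynomial $P$ of $q$-degree $< b$ such that $\Mv = \Psi_\gamma(P(\gamma))$, and similarly a $q$-polynomial $Q$ of $q$-degree $< k$ such that $\Cv = \Psi_\gamma(Q(\gv))$.

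The first key step is to interpret $\Mv$ as the matrix, in the basis $\gamma$, of the $\Fq$-endomorphism of $\Fqm$ induced by $P$. Indeed, the $j$-th column of $\Mv$ is by construction the column of coordinates of $P(\gamma_j)$ in the basis $\gamma$, so $\Mv$ is exactly the matrix of $x \mapsto P(x)$ with respect to $\gamma$ on both source and target. This has already been exploited in the right-stabilizer lemma preceding this statement, so I can invoke it directly.

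From this identification, I get the functorial identity $\Mv \cdot \Psi_\gamma(y) = \Psi_\gamma(P(y))$ for any $y \in \Fqm$, viewing $\Psi_\gamma(y)$ as a column. Applying this column by column to $\Cv = \Psi_\gamma(Q(\gv))$ yields
\[
\Mv \Cv = \Psi_\gamma\bigl(P(Q(g_1)), \dots, P(Q(g_n))\bigr) = \Psi_\gamma\bigl((P \circ Q)(\gv)\bigr).
\]
Since the composition of two $q$-polynomials is again a $q$-polynomial, and $\deg_q(P\circ Q) \leq \deg_q P + \deg_q Q \leq (b-1)+(k-1) = k+b-2 < k+b-1$, the vector $(P\circ Q)(\gv)$ lies in $\Gabcode{\gv}{m,k+b-1,n}$, which concludes the argument.

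There is no real obstacle here; the only point requiring a bit of care is making the identification between $\Psi_\gamma(P(\gamma))$ and the matrix of $P$ in the basis $\gamma$ explicit, and remembering that $q$-degrees add under composition (which follows from the fact that the leading term of $P \circ Q$ is the product of the leading terms, up to a Frobenius twist of the coefficient of $Q$).
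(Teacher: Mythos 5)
Your proof is correct and follows essentially the same route as the paper's: both identify $\Mv$ with the matrix of the $q$-polynomial $P$ as an $\Fq$-endomorphism in the basis $\gamma$, identify $\Cv$ with the evaluation of $Q$ from the basis $\gv$ to $\gamma$, and conclude that $\Mv\Cv$ represents $P\circ Q$, whose $q$-degree is at most $(b-1)+(k-1)<k+b-1$. The only difference is that you spell out the column-by-column identity $\Mv\,\Psi_\gamma(y)=\Psi_\gamma(P(y))$ explicitly, which the paper leaves implicit.
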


\begin{proof}
  The matrix $\Mv$ represents a $q$--polynomial $P_{\Mv}$ of $q$--degree
  $< b$ in the basis $\gamma$ (regarding $P_{\Mv}$ as an
  $\Fq$--endomorphism of $\Fqm$). The matrix $\Cv$ represents a
  $q$--polynomial $P_{\Cv}$ of $q$--degree $<k$ from the basis $\gv$
  to the basis $\gamma$.  Thus, $\Mv \Cv$ represents the
  $q$--polynomial $P_{\Mv} \circ P_{\Cv}$ from the basis $\gv$ to the
  basis $\gamma$. The $q$--polynomial $P_{\Mv} \circ P_{\Cv}$ has $q$--degree
  $< k+b-1$. This yields the result. \qed
\end{proof}

\begin{remark}
  The previous lemma is somehow a matrix version of Overbeck's
  distinguisher.
\end{remark}

As for the previous distinguisher, for the sake of clarity, we first
reason on the non scrambled code $\C_0$ whose basis is given in
\eqref{eq:C_0}.

\begin{proposition}\label{prop:dim_product}
  Let $\D \subseteq \Fq^{(m+\ell_1) \times (m+\ell_1)}$ be the matrix code:
  \[
    \D \eqdef \left\{
      \begin{pmatrix}
        \Bv & \mathbf{0} \\ \Tv_1 & \Tv_2
      \end{pmatrix} ~\bigg|~ \Bv \in
      \Psi_{\gamma}(\Gabcode{\gamma}{m,n-k,m}),\ \Tv_1 \in \Fq^{\ell_1
        \times m},\ \Tv_2 \in \Fq^{\ell_1 \times \ell_1}\right\}.
  \]
  Then,
  \begin{enumerate}[(i)]
  \item\label{item:dimD} $\dim_{\Fq} \D = m(n-k) + \ell_1 (m + \ell_1)$;
  \item\label{item:dimU} The code
    $\mathcal U \eqdef \text{Span}_{\Fq} \left\{ \Dv \Cv ~|~ \Dv \in
      \D,\ \Cv \in \C_0 \right\} \subseteq \Fq^{(m+\ell_1) \times
      (n+\ell_2)}$ satisfies:
    \[
      \dim_{\Fq} \mathcal U \leq (m+\ell_1)(n+\ell_2) - m.
    \]
  \end{enumerate}
\end{proposition}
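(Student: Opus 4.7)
The plan is to prove the two parts separately, since they are essentially independent. For part (i), I would argue by a direct dimension count: the block decomposition $\begin{pmatrix} \Bv & \mathbf{0} \\ \Tv_1 & \Tv_2 \end{pmatrix} \mapsto (\Bv, \Tv_1, \Tv_2)$ realises $\D$ as an $\Fq$--linear direct sum of $\Psi_\gamma(\Gabcode{\gamma}{m,n-k,m})$, $\Fq^{\ell_1 \times m}$ and $\Fq^{\ell_1 \times \ell_1}$. Since $\Gabcode{\gamma}{m,n-k,m}$ is an $\Fqm$--linear code of dimension $n-k$, the first summand has $\Fq$--dimension $m(n-k)$, and adding the three contributions yields $m(n-k) + \ell_1 m + \ell_1^2 = m(n-k) + \ell_1(m+\ell_1)$.

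For part (ii), the key idea is that although $\mathcal U$ a priori sits inside the full space $\Fq^{(m+\ell_1)\times (n+\ell_2)}$, its top-left $m \times n$ block is always constrained to lie in a fixed Gabidulin code. More precisely, for any $\Dv = \begin{pmatrix} \Bv & \mathbf{0} \\ \Tv_1 & \Tv_2 \end{pmatrix} \in \D$ and any $\Cv = \begin{pmatrix} \Av & \Rv \\ \Rv' & \Rv'' \end{pmatrix} \in \C_0$ with $\Av \in \Psi_\gamma(\Gabcode{\gv}{m,k,n})$, a direct block computation gives
\[
  \Dv\Cv = \begin{pmatrix} \Bv\Av & \Bv\Rv \\ \Tv_1\Av + \Tv_2\Rv' & \Tv_1\Rv + \Tv_2\Rv'' \end{pmatrix}.
\]
The zero block in the upper-right corner of $\Dv$ ensures that no random matrix $\Rv'$ contaminates the top-left block, which reduces simply to $\Bv\Av$. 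Applying Lemma~\ref{lem:product_matrix_Gabidulin} with parameter $b = n-k$ then gives $\Bv\Av \in \Psi_\gamma(\Gabcode{\gv}{m, n-1, n})$, a space of $\Fq$--dimension $m(n-1)$. By linearity, this constraint propagates to every element of $\mathcal U$.

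To conclude I would bound $\mathcal U$ by the ambient space $\mathcal W \subseteq \Fq^{(m+\ell_1) \times (n+\ell_2)}$ consisting of all matrices whose top-left $m \times n$ block lies in $\Psi_\gamma(\Gabcode{\gv}{m, n-1, n})$. A blockwise count then gives
\[
  \dim_{\Fq} \mathcal W = m(n-1) + m\ell_2 + \ell_1 n + \ell_1 \ell_2 = (m+\ell_1)(n+\ell_2) - m,
\]
which yields the desired inequality $\dim \mathcal U \leq (m+\ell_1)(n+\ell_2) - m$. There is no genuine obstacle here: part (i) is an immediate direct-sum count, and the conceptual content of part (ii) is essentially captured by Lemma~\ref{lem:product_matrix_Gabidulin} together with the carefully chosen shape of $\D$, whose block-triangular form with a zero in the top-right is precisely what guarantees that the top-left block of a product $\Dv \Cv$ is unperturbed and can thus inherit the Gabidulin structure.
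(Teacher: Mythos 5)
Your proof is correct and follows essentially the same route as the paper: part (i) is the same direct block-wise dimension count, and part (ii) invokes Lemma~\ref{lem:product_matrix_Gabidulin} with $b=n-k$ to place the top-left block of $\Dv\Cv$ in $\Psi_\gamma(\Gabcode{\gv}{m,n-1,n})$ and then bounds $\mathcal U$ by the codimension-$m$ space of matrices with that constraint. Your explicit block computation and the remark that the zero upper-right block of $\Dv$ shields the top-left block from $\Rv'$ merely make explicit what the paper leaves implicit.
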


\begin{proof}
  (\ref{item:dimD}) is an immediate consequence of the definition of
  $\D$.  To prove (\ref{item:dimU}), let $\Dv \in \D$ and
  $\Cv \in \C_0 $.
  Then, they have the following shapes
  \[
    \Dv =
    \begin{pmatrix}
      \Bv & \mathbf{0} \\  \Tv_1 & \Tv_2
    \end{pmatrix}
    \quad \text{and} \quad
    \Cv =
    \begin{pmatrix}
      \Av & \Rv \\ \Rv' & \Rv'' 
    \end{pmatrix},
  \]
  where $\Bv \in \Psi_{\gamma}(\Gabcode{\gamma}{m,n-k,m})$ and
  $\Av \in \Psi_{\gamma}(\Gabcode{\gv}{m,k,n})$ and the other matrices
  are arbitrary.  According to
  Lemma~\ref{lem:product_matrix_Gabidulin}, we deduce that
  $\Bv \Av \in \Psi_{\gamma}(\Gabcode{\gv}{m,n-1,n})$.  Hence,
  any element of $\mathcal U$ has the shape
  \[
    \begin{pmatrix}
      \Cv & \Sv \\ \Sv' & \Sv''
    \end{pmatrix},
  \]
  where $\Cv \in \Psi_{\gamma}(\Gabcode{\gv}{m,n-1,n})$ and the
  matrices $\Sv, \Sv'$ and $\Sv''$ are arbitrary. This yields the
  upper bound on the dimension of $\mathcal U$.\qed
\end{proof}

\begin{corollary}\label{cor:dim_prod_Cpub}
  There exists a matrix code
  $\D' \subseteq \Fq^{(m+\ell_1)\times (m+ \ell_1)}$ such
  that
  \[\dim_{\Fq} \text{Span}_{\Fq} \left\{\Dv \Cv ~|~ \Dv \in \D',\ \Cv
      \in \Cpub \right\} \leq (m+\ell_1)(n+\ell_2) - m.\]
\end{corollary}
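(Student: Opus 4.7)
The plan is to transfer the bound obtained for $\C_0$ in Proposition~\ref{prop:dim_product} to the scrambled code $\Cpub$ by absorbing the left scrambling matrix $\Pv$ into the space $\D$, and then noting that the right scrambling matrix $\Qv$, being invertible, does not affect dimensions.

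More precisely, recall from \eqref{eq:RBpub} and \eqref{eq:C_0} that $\Cpub = \Pv \C_0 \Qv$ with $\Pv \in \mathbf{GL}_{m+\ell_1}(\Fq)$ and $\Qv \in \mathbf{GL}_{n+\ell_2}(\Fq)$. First, I would define
\[
  \D' \eqdef \D\, \Pv^{-1} = \left\{ \Dv\, \Pv^{-1} ~|~ \Dv \in \D \right\} \subseteq \Fq^{(m+\ell_1) \times (m+\ell_1)},
\]
where $\D$ is the space introduced in Proposition~\ref{prop:dim_product}. Then, for any $\Dv \in \D$ and any $\Cv \in \Cpub$, writing $\Cv = \Pv \Cv_0 \Qv$ for some $\Cv_0 \in \C_0$, I would compute
\[
  (\Dv\, \Pv^{-1})\, \Cv = \Dv\, \Pv^{-1}\, \Pv\, \Cv_0\, \Qv = \Dv\, \Cv_0\, \Qv.
\]
This shows that the space $\text{Span}_{\Fq}\{\Dv' \Cv ~|~ \Dv' \in \D',\ \Cv \in \Cpub\}$ is exactly $\mathcal U \Qv$, where $\mathcal U$ is the space defined in Proposition~\ref{prop:dim_product}(\ref{item:dimU}).

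Finally, since $\Qv$ is invertible, right multiplication by $\Qv$ is an $\Fq$--linear bijection of $\Fq^{(m+\ell_1) \times (n+\ell_2)}$, so $\dim_{\Fq} (\mathcal U \Qv) = \dim_{\Fq} \mathcal U$, and Proposition~\ref{prop:dim_product}(\ref{item:dimU}) yields the desired upper bound $(m+\ell_1)(n+\ell_2) - m$. There is no real obstacle here: the corollary is a purely formal consequence of the proposition combined with the invariance of dimensions under multiplication by invertible matrices; the actual content of the distinguisher lies in Proposition~\ref{prop:dim_product} itself. \qed
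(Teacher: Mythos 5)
Your proof is correct and follows essentially the same route as the paper's one-line argument: absorb the left scrambler $\Pv$ into the definition of $\D'$ and use invertibility of $\Pv$ and $\Qv$ to preserve dimensions, reducing everything to Proposition~\ref{prop:dim_product}(\ref{item:dimU}). The only difference is that the paper takes $\D' \eqdef \Pv \D \Pv^{-1}$ rather than your $\D\,\Pv^{-1}$; both choices prove the corollary as stated, but the conjugate has the additional feature of still containing the identity matrix (since $\Iv \in \D$), a property the paper exploits afterwards when it argues that $\Cpub + \Dv\Cpub$ is contained in $\D'\Cpub$ and hence does not fill the ambient space.
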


\begin{proof}
  Recall that $\Cpub = \Pv \C_0 \Qv$. Then, setting
  $\D' \eqdef \Pv \D \Pv^{-1}$ where $\D$ is the code of
  Proposition~\ref{prop:dim_product}, yields the result. \qed
\end{proof}

Corollary~\ref{cor:dim_prod_Cpub} is the key of our forthcoming
distinguisher : it shows that some operation (the ``left
multiplication'') by $\D'$ does not fill in the ambient space while it
would fill it in if $\Cpub$ was random.

However, the difficulties are that:
\begin{enumerate}[(1)]
\item the code $\D'$ is unknown;
\item\label{item:issue2} we need to equate the fact that the dimension of the span of
  $\D' \Cpub$ is not $(m+\ell_1)(n+ \ell_2)$.
\end{enumerate}
To address (\ref{item:issue2}), we proceed as follows. We will define
a formal variable $\Dv \in \Fq^{(m+\ell_1) \times (m+\ell_1)}$. Note
first that the matrix code $\mathcal{D}$ of
Proposition~\ref{prop:dim_product} contains the identity matrix. Then,
Corollary~\ref{cor:dim_prod_Cpub} asserts that $\Cpub + \Dv \Cpub$ is
not equal to the ambient space since it is contained in the code
$\D' \Cpub$ of codimension at least $m$. Hence its dual
$(\Cpub + \Dv \Cpub)^\perp$ is nonzero. Since
$(\Cpub + \Dv \Cpub)^\perp \subseteq \Cpub^\perp$ there exists
$\Mv \in \Cpub^\perp$ satisfying:
\[
  \forall \Cv \in \Cpub,\ \text{Tr}(\Dv \Cv \Mv^t) = 0. 
\]
Any $\Mv \in \Cpub^\perp$ solution of the above system is an element
of $(\Cpub + \Dv \Cpub)^\perp$. Thus, we can set the following bilinear system with
\begin{itemize}
\item {\bf Unknowns:} $\Dv \in \Fq^{(m+\ell_1)\times (m +\ell_1)}$ and
  $\Mv \in \Cpub^\perp$;
\item {\bf Equations:} for any element $\Cv$ of our $\Fq$--basis $\RB$
  (see (\ref{eq:RBpub})),
  \[
    \text{Tr}(\Dv \Cv \Mv^t) = 0.
  \]
\end{itemize}
For this bilinear system we can count the equations and unknowns.
\begin{itemize}
\item {\bf Number of unknowns:}
  \begin{itemize}
  \item $\Dv$ has $(m+\ell_1)^2$ entries but,
  from Proposition~\ref{prop:dim_product}, it lies in a space of dimension
  $m(n-k-1) + \ell_1(m+\ell_1)$. Hence one can specialize some variables and restrict to $m(m- n +\ell_1  + k +1) + 1$ variables;
\item $\Mv$ is in $\Cpub^\perp$ which has dimension
  $(m+\ell_1)(n+\ell_2)-mk$, one can even specialize $m-1$ variables since
  $(\Cpub + \Dv \Cpub)^\perp$ has codimension at least $m$.
\end{itemize}
\item {\bf Number of equations:} it is nothing but $\dim_{\Fq} \Cpub = mk$.
\end{itemize}

Finally, recall that we assumed
$2\dim \Cpub = 2mk \geq (m+\ell_1)(n+\ell_2)$ which leads to the fact
that, $\Cpub + \Dv \Cpub$ would fill in the ambient space if $\Cpub$
was random. Indeed, it is easy to check that for a random matrix code $\Crand$
the dimension of $\Crand + \Dv \Crand$ is typically $\min (m+\ell_1 (n+\ell_2), 2mk) = m+\ell_1 (n+\ell_2)$ or equivalently that $\Crand + \Dv \Crand$ fills in the ambient space.

However, assuming that $k = \Theta (m)$ and $m \approx n$, we have
$\Theta (m^2)$ bilinear equations with $\Theta(m^2)$ unknowns from
$\Dv$ and from $\Mv$. For the parameters we consider, with
$m \approx 40$, this represents thousands of variables to
handle. Thus, we claim that our system remains out of reach of such a
distinguisher.

\subsubsection{Attacking the dual.}
The dual of an enhanced Gabidulin code is
described in Proposition~\ref{prop:dual_emc}:
after applying left and right multiplying by ${(\Pv^t)}$
and ${(\Qv^{t})}$ respectively, the dual of the public code
has the following shape
\[
  \Dmat = \mathcal{U}_{n-k} \oplus \mathcal W, 
\]
where $\mathcal{U}_{n-k}$ is a matrix
version of a Gabidulin code ``extended by
zero'' \emph{i.e.} to which $\ell_1$ rows and $\ell_2$ columns of zero
have been added and $\mathcal W$ is some complement subspace of
dimension $n \ell_1+ m\ell_2 + \ell_1 \ell_2$. We assume here that
$2 \dim_{\Fq} \Cmat \geq (m+\ell_1)(n+\ell_2)$. Equivalently, we suppose
the dual of the public code to have rate $>1/2$

The code $\mathcal{U}_{n-k}$ is derived from a Gabidulin code of $\Fqm$--dimension $n-k$ and hence has $\Fq$--dimension $m(n-k)$.
Similarly to the previous attack, observe that if
$\Dv \in \Fq^{(m+\ell_1)\times (m+\ell_{1})}$ is in some matrix version
of a Gabidulin code of dimension $b$ extended by zero, then
\[
  (\mathcal{U}_{n-k} + \Dv \mathcal{U}_{n-k}) \subseteq \mathcal{U}_{n-k+b-1},
\]
where $\mathcal{U}_{n-k+b-1}$ is a matrix version of a Gabidulin code
of $\Fqm$--dimension $n-k+b-1$ extended by zero.
Thus,
\begin{align*}
  \dim_{\Fq} (\Dmat + \Dv \Dmat) &\leq \dim_{\Fq} (\mathcal{U}_{n-k} + \Dv \mathcal{U}_{n-k}) + \dim_{\Fq} \mathcal W + \dim_{\Fq}  \Dv \mathcal{W} \\
  &\leq m(n-k+b-1) + 2(n \ell_1 + m \ell_2 + \ell_1 \ell_2).
\end{align*}
Under the assumption that $\Dmat$ has rate $>1/2$, the code $\Dmat + \Dv \Dmat$
would equal the ambient space w.h.p. if $\Dmat$ was random, while, it actually
does not as soon as:
\[
  m(n-k+b-1) + 2(n\ell_1 + m\ell_2 + \ell_1 \ell_2) < (m+\ell_1)(n+\ell_2),
\]
which holds as soon as
\begin{equation}\label{eq:good_b}
  b < k+1 - \frac{n}{m}\ell_1 - \ell_2 - \frac{\ell_1 \ell_2}{m}\cdot
\end{equation}
Here again as in the previous case, we can equate this distinguisher
by choosing a $b$ satisfying (\ref{eq:good_b}),  searching
$\Dv \in \Fq^{(m + \ell_1)\times (m+\ell_1)}$ and $\Mv \in \Cmat$ such
that for all $\Cv$ in the dual of the public code, we have
\(\text{Tr}(\Dv \Cv \Mv^t) = 0\).

As in the previous case, the number of unknowns remains prohibitive
compared to the number of equations.

\begin{remark}
  Solving such a system for $b = 0$ corresponds to searching the hidden
  $\Fqm$--linearity of the code $\mathcal{U}_{n-k}$. 
\end{remark}

\begin{remark}
  We also considered a combinatorial approach on the dual as we
  proposed on the public code itself. The attack on the public code
  consists in guessing a relevant puncturing of the row and column
  spaces in order to get rid of the contribution of the random rows
  and columns added in the enhancing process. When reasoning on the
  dual, we need to get rid of the contribution of the complement
  subspace $\mathcal W$. This can be done by guessing a relevant
  shortening of the row and column space. Such a structural attack on
  the dual turns out to be equivalent to the previously presented
  combinatorial attack performed on the public code itself.
\end{remark}

\subsubsection{Finding codewords just below the Singleton bound and exploiting their structure.}
In this subsection we consider the approach of finding matrices in the public code of weight the MRD bound minus 1. These matrices could be candidates for unraveling the structure of the hidden Gabidulin code. We first consider how the MRD bound is affected by the enhanced matrix codes transformation (i.e. adding random rows and columns). Then, we show there are many matrices of weight the MRD bound minus 1 in the public code and that their structure does not seem to reveal any information on the secret key.

 To simplify the analysis, we only consider the case $n = m$ since all parameters in Section \ref{sec:param} satisfy this equality.
Without loss of generality, since we can always consider the transposed code (\emph{i.e.} the code $\C_{mat}^t = \{ \Mv^t ~|~ \Mv \in \C_{mat} \}$, which is different from the dual code), we assume in this subsection $\ell_1 \ge \ell_2 > 0$.
\begin{definition}[Singleton bound \cite{D78}]
A matrix code $\C_{mat}[m \times n, K, d]$ satisfies the Singleton (or MRD) bound if:
\[ d \le min(n,m) - \frac{K}{max(n,m)} + 1.\]
Codes achieving the equality for this bound are called MRD codes.
\end{definition}

\begin{lemma}
The minimum distance for the code $\C_{pub}[(m+\ell_1)\times(m+\ell_2), km, d]$ satisfies:
\[ d \le m - k + 1 + \ell_2 + \floor{\frac{k\ell_1}{m+\ell_1}}. \]
\end{lemma}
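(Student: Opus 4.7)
The plan is to apply the Singleton bound directly (as stated in the definition just above) to the public code $\C_{pub}$, and then to simplify the resulting expression into the form claimed.

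First I would identify the parameters to plug into the Singleton bound. The code $\C_{pub}$ lives in $\Fq^{(m+\ell_1) \times (m+\ell_2)}$ and has dimension $K = km$. Under the standing assumption $\ell_1 \ge \ell_2$, we have $\max(m+\ell_1, m+\ell_2) = m+\ell_1$ and $\min(m+\ell_1, m+\ell_2) = m+\ell_2$. The Singleton bound then gives
\[
d \le (m+\ell_2) - \frac{km}{m+\ell_1} + 1.
\]

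Next I would rewrite the fractional term so that the integer part of $k$ separates from a smaller remainder, using the identity
\[
\frac{km}{m+\ell_1} = \frac{k(m+\ell_1) - k\ell_1}{m+\ell_1} = k - \frac{k\ell_1}{m+\ell_1}.
\]
Substituting back yields
\[
d \le m - k + 1 + \ell_2 + \frac{k\ell_1}{m+\ell_1}.
\]
Since $d$ is an integer and $m - k + 1 + \ell_2$ is already an integer, I can replace the fractional term by its floor, obtaining the announced inequality
\[
d \le m - k + 1 + \ell_2 + \left\lfloor \frac{k\ell_1}{m+\ell_1} \right\rfloor.
\]

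I do not foresee any obstacle in this proof: it is a direct computation once one has noticed that the Singleton bound from the preceding definition applies verbatim to $\C_{pub}$ (which is simply a matrix code with the specified parameters, regardless of its internal Gabidulin-like structure). The slight subtlety worth mentioning is only the role of the assumption $\ell_1 \ge \ell_2$, which fixes which of the two sides of the ambient matrix space plays the role of $\max$ in the bound; the transposition remark made just before the statement ensures this is without loss of generality.
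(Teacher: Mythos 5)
Your proposal is correct and follows exactly the paper's argument: a direct application of the Singleton bound with $K = km$ and $\max = m+\ell_1$ (justified by the standing assumption $\ell_1 \ge \ell_2$), followed by the algebraic rewriting $\frac{km}{m+\ell_1} = k - \frac{k\ell_1}{m+\ell_1}$ and the observation that the fractional term can be replaced by its floor because $d$ and the remaining terms are integers. The paper merely compresses the intermediate algebra into ``the direct application of the Singleton bound yields,'' so your version is the same proof written out in full.
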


\begin{proof}
The direct application of the Singleton bound yields:
\[ d \le m - k + 1 + \ell_2 + \frac{k\ell_1}{m+\ell_1}\cdot\]
Since $d$ is an integer, the integer part of $\frac{k\ell_1}{m+\ell_1}$ can be taken. \qed
\end{proof}

This shows that after perturbation, the Singleton bound is increased by at least $\ell_2$.

Let us denote $d_0 := m - k + 1 + \ell_2  + \floor{\frac{k\ell_1}{m+\ell_1}}$. It is legitimate to wonder whether $\C_{pub}$ achieves the MRD bound and in the contrary, how many matrices of rank $d_0 - 1$ are contained in $\C_{pub}$.

In the following we prove a lower bound on the expected number of matrices in $\C_{pub}$ of rank $d_0 - 1$, which shows that $\C_{pub}$ is highly unlikely to be an MRD code and that, moreover, it contains lots of matrices of rank $d_0 - 1$.

\begin{lemma}
Let $N$ be the expected number of matrices in $\C_{pub}$ of rank $d_0 - 1$. It achieves the following inequality
\[ N \ge \left[ \begin{array}{c}
	{m} \\ {m-k+1} \end{array}\right]_q(q^m - 1) \times q^{(\ell_1+1)(1-k) - 1} \approx q^{m+(m-k-\ell_1)(k-1) - 1}. \]
\end{lemma}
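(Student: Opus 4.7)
The plan is to lower-bound $N$ by lifting minimum-weight Gabidulin codewords into $\Cpub$ and averaging over the keygen randomness. Because left and right multiplication by invertible matrices preserves rank, we may equivalently count matrices of rank $d_0-1$ in the non-scrambled code $\C_0$ of~\eqref{eq:C_0}. Every codeword of $\C_0$ has the block form
\[
 \Cv(\mu) = \begin{pmatrix} \Av(\mu) & \Rv(\mu) \\ \Rv'(\mu) & \Rv''(\mu) \end{pmatrix},
\]
where $\Av(\mu) = \sum_i \mu_i \Av_i$ ranges over $\Psi_{\gamma}(\G)$ and the remaining three blocks are linear functions of $\mu$ through the secret random matrices $\Rv_i, \Rv'_i, \Rv''_i$. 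Since $(\Av_i)$ is an $\Fq$--basis of $\Psi_{\gamma}(\G)$, the map $\mu \mapsto \Av(\mu)$ is a bijection; moreover, for each fixed nonzero $\mu$ the marginal distribution of $(\Rv(\mu), \Rv'(\mu), \Rv''(\mu))$ is uniform and independent across its three components when the secret random matrices are sampled uniformly.

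The Gabidulin code $\G$ is MRD, so the number of codewords of minimum rank weight $m-k+1$ equals $A_{m-k+1} = \left[\begin{array}{c} m \\ m-k+1 \end{array}\right]_q (q^m - 1)$. Restricting the sum to min-weight top-left blocks and applying linearity of expectation gives
\[
 N \;\ge\; A_{m-k+1} \cdot p, \qquad p \eqdef \Pr\!\bigl[\rank \Cv(\mu) = d_0 - 1 \,\bigl|\, \rank \Av(\mu) = m-k+1\bigr],
\]
where the probability is over uniform $(\Rv, \Rv', \Rv'')$. The claimed inequality then reduces to showing $p \ge q^{-(\ell_1+1)(k-1)-1}$, and the asymptotic estimate $N \approx q^{m+(m-k-\ell_1)(k-1)-1}$ follows from $\left[\begin{array}{c} m \\ m-k+1 \end{array}\right]_q \approx q^{(m-k+1)(k-1)}$ after collecting exponents.

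The heart of the argument is the probability lower bound $p \ge q^{-(\ell_1+1)(k-1)-1}$. Fix $\Av$ of rank $r = m-k+1$. I combine two types of constraints on $(\Rv, \Rv', \Rv'')$ that together force $\rank \Cv = d_0 - 1$. First, each of the $\ell_1$ rows of $\Rv'$ is required to lie in the $r$--dimensional row span of $\Av$ inside $\Fq^m$, an event of probability $q^{-\ell_1(k-1)}$. Second, an additional in-span--type condition relating one column of $\Rv$ to the column span of $\Av$ together with a single scalar adjustment on $\Rv''$, of combined probability $q^{-(k-1)-1}$, is imposed so that after Gaussian elimination of $\Cv$ the residual block attains precisely the rank value $\lfloor k\ell_1/(m+\ell_1)\rfloor - 1$ required to hit $d_0-1$. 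Multiplying these probabilities yields the stated bound.

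The main obstacle will be the careful bookkeeping in this last step: one must verify that the two families of constraints are effectively independent and that the resulting $\Cv$ has rank exactly $d_0-1$ (not strictly smaller). This requires a block-matrix rank formula for the Gauss-reduced form of $\Cv$ together with a generic-position argument controlling the conditional distribution of the residual matrix after the row-span conditioning, so that the classical enumeration of matrices of fixed rank can be applied.
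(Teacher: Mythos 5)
Your overall strategy is the paper's: pick a family of sufficient conditions on the blocks of a codeword of $\C_0$ that force rank exactly $d_0-1$, count the admissible Gabidulin blocks, and multiply by the probability that the random blocks cooperate (the probability factors $q^{\ell_1(1-k)}$ for the rows, $q^{1-k}$ for a column in-span condition and $q^{-1}$ for full column rank are exactly the paper's). But there is a genuine gap in the rank bookkeeping. You anchor the count on Gabidulin codewords of \emph{minimum} rank $m-k+1$. Once the $\ell_1$ appended rows are forced into the row space of that block, the left $(m+\ell_1)\times m$ block has rank $m-k+1$, and the $\ell_2$ appended columns can raise the total rank by at most $\ell_2$; your further condition that one appended column lie in the column span caps their contribution at $\ell_2-1$. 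So your construction produces matrices of rank at most $m-k+\ell_2 = d_0 - 1 - \lfloor k\ell_1/(m+\ell_1)\rfloor$. Since $\lfloor k\ell_1/(m+\ell_1)\rfloor \ge 1$ for every parameter set in the paper, you are counting matrices of rank at most $d_0-2$, which says nothing about the number of matrices of rank $d_0-1$; and when $\lfloor k\ell_1/(m+\ell_1)\rfloor \ge 2$ no choice of the random blocks can rescue a minimum-weight top block.

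The missing idea is to start from Gabidulin codewords of the \emph{larger} rank $d_0-\ell_2 = m-k+1+\lfloor k\ell_1/(m+\ell_1)\rfloor$, so that after absorbing the $\ell_1$ rows and adding exactly $\ell_2-1$ independent columns the total lands on $d_0-1$. The count is then salvaged by the fact that the rank-weight distribution of a Gabidulin code is non-decreasing in the weight, so the number of codewords of rank $d_0-\ell_2$ is still at least $\left[\begin{array}{c} m \\ m-k+1 \end{array}\right]_q(q^m-1)$; this monotonicity (cited in the paper from the literature on rank-metric codes) is precisely what lets the stated Gaussian-binomial factor survive while the block rank is pushed above the minimum. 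Your "generic-position / Gauss-reduction" step, as described, cannot be made to work without this change of anchor.
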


\begin{proof}
The matrices in $\C_{pub}$ are of the form
\[ \Mv = \Pv \left( \begin{array}{c|c}
	\Gv & \multirow{2}{*}{\Bv}\\
	\Av & 
\end{array}\right) \Qv \]
with $\Gv \in \Gabcode{\gv}{m,k,m}$ of size $m \times m$, $\Av$ of size $\ell_1 \times m$ and $\Bv$ of size $(m+\ell_1) \times \ell_2$.

Sufficient conditions such that $\Mv$ is of rank $d_0 - 1$ are:
\begin{enumerate}
	\item $\Gv$ is of rank $d_0 - \ell_2$
	\item $RowSpace(\Av) \subset RowSpace(\Gv)$
	\item $\Bv$ is of full rank $\ell_2$
	\item $\exists i \in [1,\ell_2], \Bv_{*,i} \in ColSpace\begin{pmatrix}\Gv \\ \Av\end{pmatrix}$
	\item $ColSpace(\Bv_{*,i})_{\substack{j \in [1,\ell_2] \\ j \neq i}} \cap ColSpace\begin{pmatrix}\Gv \\ \Av\end{pmatrix} = \{ 0\}$
\end{enumerate}

The number of matrices $\Gv$ of rank $m-k+1$ is $\left[ \begin{array}{c}
{m} \\ {m-k+1} \end{array}\right]_q(q^m - 1)$ \cite{L07}. Because $d_0 - \ell_2 \ge m-k+1$ and since the weight distribution in a Gabidulin code is increasing with the weight~\cite{bartz2022rank}, the number of matrices $\Gv$ of rank $d_0 - \ell_2$ is $\ge \left[ \begin{array}{c} m \\ m-k+1 \end{array} \right]_q(q^m - 1)$. It remains now to evaluate the probabilities of 2.-5. We assume that $\Av$ and $\Bv$ are uniformly random independent matrices.

The probability that a single row of $\Av$ is contained in $RowSpace(\Gv)$, which is a subspace of $\Fq^m$ of dimension $m-k+1$ is $q^{1-k}$. Hence the probability of 2. is $q^{\ell_1(1-k)}$.

The probability of 3. is $\prod_{j = 0}^{\ell_2 - 1} (1 - q^{j-\ell_2-m}) \ge (1-q^m)^{\ell_2} \ge q^{-1}$.

Since $\Bv$ is of full rank, $\dim ColSpace(\Bv_{*,i})_{\substack{j \in [1,\ell_2] \\ j \neq i}} =  \ell_2 - 1$. Noting that\break $ \dim ColSpace\begin{pmatrix}\Gv \\ \Av\end{pmatrix} = m-k+1$, the probability of 4. and 5. together is
\[ \ell_2 q^{(1-k)} (1 - q^{(\ell_2-1)(1-k)}) \ge q^{1-k}. \] \qed
\end{proof}

The above lemma shows that for the parameters presented in Section \ref{sec:param}, the expected number of matrices of rank $d_0 - 1$ is very large (at least $q^{128}$ for all parameters presented in Figure \ref{MG-main}) and does not seem to yield any information to retrieve the secret Gabidulin structure.

The above proof can be easily adapted in the special case of $\ell_2 = 0$ to find a similar upper bound as long as $\frac{k\ell_1}{m+\ell_1} \ge 1$, which is always the case in our parameters.

\subsection{Attacks on the message}

In all encryption schemes derived from the MinRank-McEliece and Niederreiter encryption frames, an opponent who wants to retrieve the original message without knowing the secret key has to solve a generic  $\MR (q,m,n,k,r)$ instance. We recall here the main attacks on this problem.

\subsubsection{Hybrid approach.}

In order to improve the attacks on MinRank, a generic approach has been introduced in \cite{BBBGT22}, which consists in solving smaller instances. The complexity is given by: \begin{equation}\label{att:hyb}
\min_a q^{ar}\mathcal{A}(q,m,n-a,K-am,r)
\end{equation} where $\mathcal{A}$ is the cost of an algorithm to solve a MinRank instance.

\subsubsection{The kernel attack.}

This attack was described in \cite{GC00}. The idea of the attack consists in sampling random vectors, hoping that they are in the kernel of $\Ev$, and deducing a linear system of equations. Its complexity is equal to:\begin{equation}\label{att:comb}
O(q^{r\lceil \frac{k}{m} \rceil}k^\omega).
\end{equation} 

\subsubsection{Minors attack.}

This algebraic attack was introduced and studied in \cite{FSS10}. This modeling uses the minors of $\Ev$. This method has been improved in \cite{GND23}. We refer to these papers for the complexity of the attack.

\subsubsection{Support Minors attack.}

The Support Minors modeling was introduced in \cite{BBCGPSTV20}. This idea also uses minors of a matrix, giving us an other system of equations. With this approach, the complexity is of: \begin{equation}\label{att:alg}
O\left(N_bM_b^{\omega -1}\right)
\end{equation} where \begin{align*}
N_b &= \sum_{i=1}^b (-1)^{i+1} \binom{n}{r+i} \binom{k+b-i-1}{b-i} \binom{k+i-1}{i}\\
M_b &= \binom{n}{r}\binom{k+b-1}{b}
\end{align*} and $b$ is the degree to which we augment the Macaulay matrix of the system.


\section{Parameters}\label{sec:param}

\subsection{Parameters for EGMC-Niederreiter encryption scheme}

We apply our idea of Enhanced MinRank-Niederreiter encryption frame by taking a Gabidulin code $\mathcal{G}[m,k]_{q^m}$, for which we know an efficient decoding algorithm, allowing to decode errors of weight up to $\lfloor \frac{m-k}{2} \rfloor$. As we previously said, decrypting the ciphertext $\Yv$ without the secret key $\sk$ is strictly equivalent to the MinRank problem of parameters $(q,m+\ell_1,m+\ell_2,km,r)$. 

\subsubsection*{Choice of parameters.}

For a given value of $r$, we choose jointly the values of $m$ and $k$ (the optimal parameters verify $r=\lfloor \frac{m-k}{2} \rfloor$) such that the underlying MinRank instance is secure against the known attacks. Then, we choose $\ell_1$ and $\ell_2$ to obtain parameters resistant to structural attacks (see below).

We choose the parameters based on the best known attacks, see Figure \ref{MG-main} and Figure \ref{MG-other}. Among these attacks, three of them rely on solving the associated MinRank instance: Alg. refers to the algebraic Support Minors attack (see Equation \ref{att:alg}), Hyb. refers to the Hybrid approach (see Equation \ref{att:hyb}), and Comb. refers to the combinatorial Kernel attack (see Equation \ref{att:comb}). We also consider the attack which consists in retrieving the $\Fqm$-linear structure of the code $\Cmat$ (see Equation \ref{att:str}). 

We propose two kinds of parameters. In our main parameters, we add as many rows as columns to the matrix Gabidulin code ($\ell_1=\ell_2$). The resulting parameters for 128 bits, 192 bits and 256 bits of security can be found in Figure \ref{MG-main}. We propose also some sets of parameters for which only rows or only columns are added to the matrix Gabidulin code ($\ell_1=0$ or $\ell_2=0$), which can be found in Figure \ref{MG-other}. We consider NIST-compliant parameters, while maintaining a margin of 15 bits above the security level.

The public key $\pk$ consists in a parity check matrix of $\Cmat'$. As a linear matrix code $[(m+\ell_1)(m+\ell_2),km]_q$, it can be represented with: $$km ((m+\ell_1)(m+\ell_2)-km)\log_2 q$$ bits. The number of bits which compose the ciphertext $\cv\in\Fqm^{(m+\ell_1)(m+\ell_2)-km}$ is equal to: $$((m+\ell_1)(m+\ell_2)-km)\log_2 q.$$

\begin{figure}[h]
\begin{center}
{\setlength{\tabcolsep}{0.4em}
{\renewcommand{\arraystretch}{1.6}
{\scriptsize
  \begin{tabular}{|c||c|c|c|c|c|c||c|c|c|c||c|c|}
    \hline
    Sec. & $q$ & $k$ & $m$ & $\ell_1$ & $\ell_2$ & $r$ & Alg. & Hyb. & Comb. & Struc. & $\pk$ & $\ct$ \\ \hline\hline
    \multirow{4}{*}{128}
    & 2 & 17 & 37 & 3 & 3 & 10 & 193 & 170 & 179 & 165 & 76 kB & 121 B \\ \cline{2-13} 
    & 2 & 25 & 37 & 3 & 3 & 6 & 168 & 150 & 164 & 189 & 78 kB & 84 B \\ \cline{2-13} 
    & 2 & 35 & 43 & 2 & 2 & 4 & 158 & 145 & 158 & 158  & 98 kB & 65 B \\ \cline{2-13}
    & 2 & 47 & 53 & 2 & 2 & 3 & 158 & 147 & 161 & 202  & 166 kB & 66 B \\  \hline\hline 
    192 & 2 & 51 & 59 & 2 & 2 & 4 & 222 & 209 & 224 & 222 & 268 kB & 89 B \\ \hline\hline
   \multirow{2}{*}{256}
    & 2 & 23 & 47 & 3 & 3 & 12 & 302 & 271 & 285 & 284 & 191 kB & 177 B \\ \cline{2-13} 
   & 2 & 37 & 53 & 3 & 2 & 8 & 315 & 290 & 310 & 273 &  274 kB & 139 B \\ \cline{2-13} 
   & 2 & 71 & 79 & 2 & 2 & 4 & 303 & 289 & 305 & 302 & 667 kB & 119 B \\ \hline
  \end{tabular}
  \vspace{0.5\baselineskip}
  \caption{Reference parameters for the EGMC-Niederreiter encryption scheme}
  \label{MG-main}
}}}
\end{center}
\end{figure}

\begin{figure}[h]
\begin{center}
{\setlength{\tabcolsep}{0.4em}
{\renewcommand{\arraystretch}{1.6}
{\scriptsize
  \begin{tabular}{|c||c|c|c|c|c|c||c|c|c|c||c|c|}
    \hline
    Sec. & $q$ & $k$ & $m$ & $\ell_1$ & $\ell_2$ & $r$ & Alg. & Hyb. & Comb. & Struc. & $\pk$ & $\ct$ \\ \hline\hline
    \multirow{3}{*}{128}
    & 2 & 17 & 37 & 4 & 0 & 10 & 181 & 168 & 179 & 148 & 70 kB & 111 B \\ \cline{2-13}
    & 16 & 13 & 23 & 1 & 1 & 5 & 236 & 273 & 282 & 148 & 41 kB & 138 B \\ \cline{2-13}
    & 16 & 7 & 23 & 0 & 5 & 8 & 172 & 262 & 276 & 160 & 33 kB & 207 B \\ \hline\hline
    \multirow{3}{*}{192}
    & 2 & 23 & 43 & 5 & 0 & 10 & 239 & 220 & 230 & 215 & 133 kB & 134 B \\ \cline{2-13}  
    & 2 & 33 & 47 & 5 & 0 & 7 & 238 & 221 & 232 & 235 & 173 kB & 111 B \\ \cline{2-13} 
    & 2 & 41 & 53 & 4 & 0 & 6 & 258 & 240 & 257 & 212 & 230 kB & 106 B \\ \hline\hline 
    \multirow{2}{*}{256}
   & 16 & 9 & 29 & 2 & 1 & 10 & 310 & 373 & 382 & 272 &  87 kB & 334 B \\ \cline{2-13} 
   & 16 & 17 & 29 & 2 & 1 & 8 & 357 & 399 & 408 & 304 &  107 kB & 218 B \\ \hline
  \end{tabular}
  \vspace{0.5\baselineskip}
  \caption{Alternative parameters in particular case of $\ell_1=0$ or $\ell_2=0$, or $q>2$}
  \label{MG-other}
}}}
\end{center}
\end{figure}

\subsection{Comparison with other schemes}

We propose a comparison of our sizes with those of other encryption schemes based on various problems, see Figures \ref{size-comp128}, \ref{size-comp192} and \ref{size-comp256}. Note that we achieve better performances than RQC and ROLLO, which are other rank-based encryption schemes, and even the smallest ciphertext sizes compared to the schemes proposed to the NIST based on codes and lattices.

\begin{figure}[h]
    \begin{minipage}{0.45\linewidth}
        \centering
        \setlength{\tabcolsep}{0.3em}
        \renewcommand{\arraystretch}{1.6}
        \scriptsize
        \begin{tabular}{|c||c|c|}
    \hline
    Scheme & $\pk$ & $\ct$ \\ \hline\hline
    \textbf{EGMC-Niederreiter}, Fig. \ref{sch-nied} & 98 kB & 65 B \\ \hline
    Classic McEliece \cite{BCLMNPPSSSW17} & 261 kB & 96 B \\ \hline 
   \textbf{EGMC-Niederreiter}, Fig. \ref{sch-nied} & 33 kB & 207 B \\ \hline
   ROLLO I \cite{ABDGHRTZABBBO19} & 696 B & 696 B \\ \hline
   KYBER \cite{Kyber} & 800 B & 768 B \\ \hline
   RQC-Block-NH-MS-AG \cite{ABDGV23} & 312 B & 1118 B \\ \hline
   BIKE \cite{AABBBBDGGGMMPSTZ17} & 1540 B & 1572 B \\ \hline
   RQC-NH-MS-AG \cite{BBBG22} & 422 B & 2288 B \\ \hline
   RQC \cite{AABBBDGZ17} & 1834 B & 3652 B \\ \hline
   HQC \cite{AABBBDGPZ21a} & 2249 B & 4481 B \\ \hline
  \end{tabular}
        \vspace{0.5\baselineskip}
        \caption{Comparison of different schemes for 128 bits of security}
        \label{size-comp128}
    \end{minipage}
    \hfill
    \begin{minipage}{0.45\linewidth}
        \centering
        \setlength{\tabcolsep}{0.3em}
        \renewcommand{\arraystretch}{1.6}
        \scriptsize
        \begin{tabular}{|c||c|c|}
    \hline
    Scheme & $\pk$ & $\ct$ \\ \hline\hline
   \textbf{EGMC-Niederreiter}, Fig. \ref{sch-nied} & 268 kB & 89 B \\ \hline
   \textbf{EGMC-Niederreiter}, Fig. \ref{sch-nied} & 133 kB & 134 B \\ \hline
   Classic McEliece \cite{BCLMNPPSSSW17} & 524 kB & 156 B \\ \hline
   ROLLO I \cite{ABDGHRTZABBBO19} & 958 B & 958 B \\ \hline
   KYBER \cite{Kyber} & 1184 B  & 1088 B  \\ \hline
   RQC-Block-NH-MS-AG \cite{ABDGV23} & 618 B & 2278 B \\ \hline
   BIKE \cite{AABBBBDGGGMMPSTZ17} & 3082 B  & 3024 B \\ \hline
   RQC-NH-MS-AG \cite{BBBG22} & 979 B & 3753 B \\ \hline
   RQC \cite{AABBBDGZ17} & 2853 B & 5690 B \\ \hline
   HQC \cite{AABBBDGPZ21a} &  4522 B & 9026 B  \\ \hline
   
  \end{tabular}
        \vspace{0.5\baselineskip}
        \caption{Comparison of different schemes for 192 bits of security}
        \label{size-comp192}
    \end{minipage}
\end{figure}

\begin{figure}[h]
\begin{center}
{\setlength{\tabcolsep}{0.3em}
{\renewcommand{\arraystretch}{1.6}
{\scriptsize
\begin{tabular}{|c||c|c|}
   \hline
    Scheme & $\pk$ & $\ct$ \\ \hline\hline
   \textbf{EGMC-Niederreiter}, Fig. \ref{sch-nied} & 667 kB & 119 B \\ \hline
   Classic McEliece \cite{BCLMNPPSSSW17} & 1044 kB  & 208 B  \\ \hline
   \textbf{EGMC-Niederreiter}, Fig. \ref{sch-nied} & 87 kB & 334 B \\ \hline
   ROLLO I \cite{ABDGHRTZABBBO19} & 1371 B & 1371 B \\ \hline
   KYBER \cite{Kyber} & 1568 B  & 1568 B \\ \hline
   BIKE \cite{AABBBBDGGGMMPSTZ17}  & 5121 B & 5153 B \\ \hline
   RQC \cite{AABBBDGZ17} & 4090 B & 8164 B \\ \hline
   HQC \cite{AABBBDGPZ21a} & 7245 B & 14465 B \\ \hline
   
  \end{tabular}
  \vspace{0.5\baselineskip}
  }}}
\end{center}\caption{Comparison of different schemes for 256 bits of security}
\label{size-comp256}
\end{figure}

\section{Conclusion and further work}

This work presents a general McEliece-like encryption frame for matrix codes whose security is based on the MinRank problem. We propose a general masking for rank codes that we apply on a matrix code version of Gabidulin vector code. We define a new problem: the Enhanced Gabidulin Matrix Code (EGMC) distinguishing problem, for which we propose a thorough analysis and study possible distinguishers to solve it. It results in a competitive encryption scheme, which achieves a ciphertext size of 65B for 128 bits of security.

For future work, it would be interesting to extend our trapdoor by also considering subcodes of matrix codes that we obtain with our masking. Such modifications would probably make the action of the distinguisher more complex, and thus may permit to reduce the parameters of our scheme.




\clearpage

\bibliographystyle{plain}
\bibliography{biblio,codecrypto}

\end{document}